\author{Jean Cardinal\inst{1} \and Stefan Felsner\inst{2}}
\institute{
Universit\'e libre de Bruxelles (ULB), 
Brussels, Belgium.\\
\email{jcardin@ulb.ac.be}
\and
Technische Universit\"at Berlin, 
Berlin, Germany.\\
\email{felsner@math.tu-berlin.de}
}
\title{Covering Partial Cubes with Zones}
\date{}
\begin{document}
\maketitle
\sloppy

\begin{abstract}
  A partial cube is a graph having an isometric embedding in a
  hypercube. Partial cubes are characterized by a natural equivalence
  relation on the edges, whose classes are called {\em zones}. The
  number of zones determines the minimal dimension of a hypercube
  in which the graph can be embedded. We consider the problem of
  covering the vertices of a partial cube with the minimum number of
  zones. The problem admits several special cases, among which are the
  problem of covering the cells of a line arrangement with a minimum
  number of lines, and the problem of finding a minimum-size fibre in
  a bipartite poset. For several such special cases, we give upper and
  lower bounds on the minimum size of a covering by zones. We also
  consider the computational complexity of those problems, and
  establish some hardness results.
\end{abstract}

\section{Introduction}

As an introduction and motivation to the problems we consider, let us
look at two puzzles.
\medskip

\begin{wrapfigure}[16]{r}{0.44\textwidth}%
\vskip-9mm
\kern5pt\includegraphics[width=0.45\textwidth]{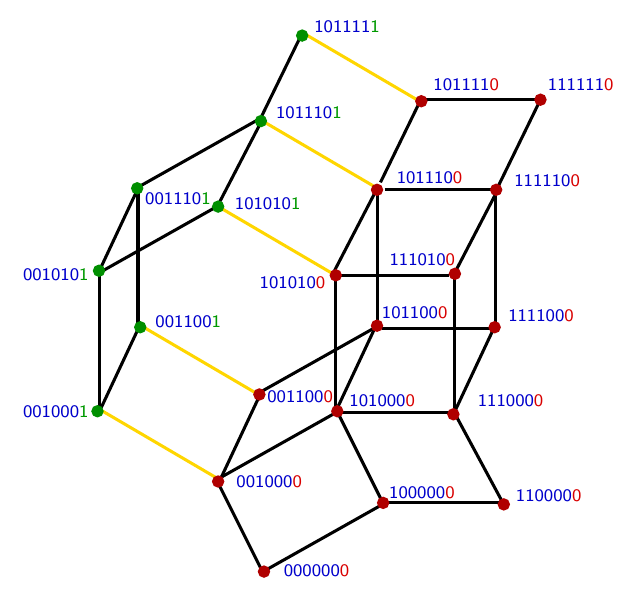}
\caption{\label{fig:pcube} A partial cube with vertex
labels representing an isometric embedding in $Q_7$ 
and a highlighted zone.}
\end{wrapfigure}
\noindent
{\bf Hitting a consecutive pair.} Given a set of $n$ elements, how
many pairs of them
 must be chosen so that every permutation of the $n$
elements has two consecutive elements forming a chosen pair?
\medskip

\noindent
{\bf Guarding cells of a line arrangement.} Given an arrangement of
$n$ straight lines in the plane, how many lines must be chosen so that
every cell of the arrangement is bounded by at least one of the chosen
line?
\vskip8pt

\noindent
While different, the two problems can be cast as special cases of a
general problem involving {\em partial cubes}.  The $n$-dimensional
hypercube $Q_n$ is the graph with the set $\{0,1\}^n$ of binary words
of length $n$ as vertex set, and an edge between every pair of
vertices that differ on exactly one bit. A~{\em partial cube} $G$ of
dimension $n$ is a subgraph of the $n$-dimensional hypercube with the
property that the distance between two vertices in $G$ is equal to
their Hamming distance, i.e., their distance in $Q_n$. In general, a
graph $G$ is said to have an {\em isometric embedding} in another
graph $H$ whenever $G$ is a subgraph of $H$ and the distance between
any two vertices in $G$ is equal to their distance in $H$. Hence
partial cubes are the graphs admitting an isometric
embedding in $Q_n$, for some $n$. The edges of
a partial cube can be partitioned into at most $n$ equivalence classes
called {\em zones}, each corresponding to one of the $n$ directions of
the edges of $Q_n$.

We consider the following problem:
\medskip

\noindent
{\bf Partial Cube Covering:} {\it Given a partial cube, find a
  smallest subset $S$ of its zones such that every vertex is incident
  to an edge of one of the zones in $S$.}
\medskip

If we refer to the labeling of the vertices by words in $\{0,1\}^n$,
the problem amounts to finding a smallest subset $I$ of $[n]$ such
that for every vertex $v$ of the input partial cube, there is at least
one $i\in I$ such that flipping bit $i$ of the word of $v$ yields
another vertex.

\section{Classes of Partial Cubes}

The reader is referred to the books of Ovchinnikov~\cite{O11} and
Imrich and Klav\v{z}ar~\cite{IK00} for known results on partial cubes.
Let us also mention that some other structures previously defined in
the literature are essentially equivalent to partial cubes. Among them
are {\it well-graded families of sets} defined by Doignon and
Falmagne~\cite{DF11}, and {\it Media}, defined by Eppstein, Falmagne, and
Ovchinnikov~\cite{EFO08}.

We are interested in giving bounds on the minimum number of zones
required to cover the vertices of a partial cube, but also in the
computational complexity of the problem of finding such a minimum
cover.  Regarding bounds, it would have been nice to have a general
nontrivial result holding for every partial cube. Unfortunately, only
trivial bounds hold in general.

In one extreme case, the partial cube $G=(V,E)$ is a star, consisting
of one central vertex connected to $|V|-1$ other vertices of degree
one. This is indeed a partial cube in dimension $n=|V|-1$, every zone
of which consists of a single edge. Since there are $n$ vertices of
degree one, all $n$ zones must be chosen to cover all vertices.  In
the other extreme case, the partial cube is such that there exists a
single zone covering all vertices. This lower bound is attained by the
hypercube $Q_n$.

Table~\ref{tab:res} gives a summary of our results for the various families of
partial cubes that we considered. For each family, we consider upper and lower
bounds and complexity results. We now briefly describe the various families we
studied. The proofs of the new results are in the following sections.

\subsection{Hyperplane Arrangements} 

The dual graph of a simple arrangement of $n$ hyperplanes in
$\mathbb{R}^d$ is a partial cube of dimension $n$. The partial cube
covering problem becomes the following: given a simple hyperplane
arrangement, find a smallest subset $S$ of the hyperplanes such that
every cell of the arrangement is bounded by at least one hyperplane in
$S$. 

\begin{wrapfigure}[10]{r}{0.44\textwidth}%
\vskip-10mm
\includegraphics[width=0.44\textwidth]{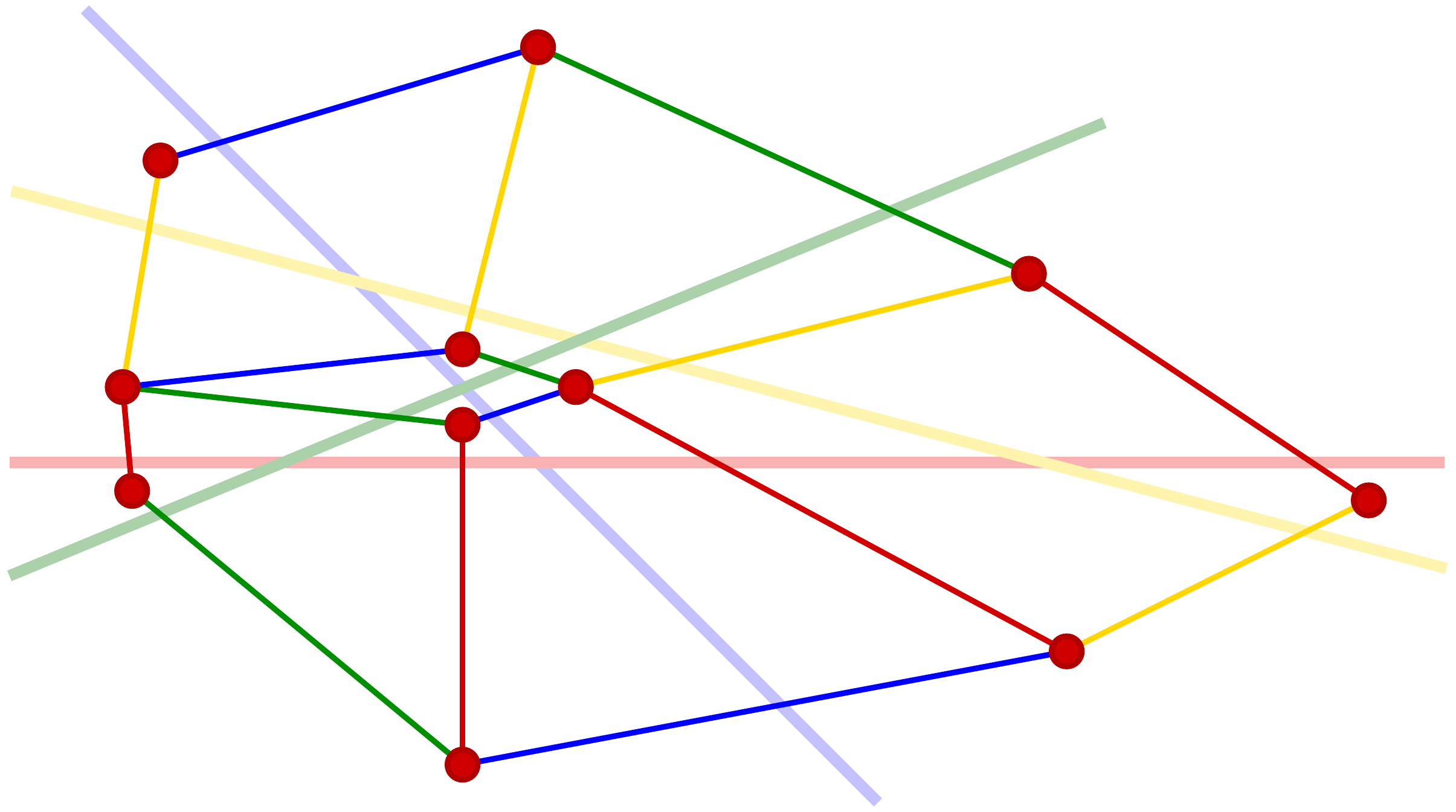}
\caption{An arrangement of 4 lines superimposed with its dual.} 
\label{fig:larr}
\end{wrapfigure}

The case of line arrangements was first considered
in~\cite{BCHKLT12}. In fact it was this problem that motivated us to
investigate the generalization to partial cubes. The best known lower and
upper bounds for line arrangements on $|S|$ are of order $n/6$
and $n-O(\sqrt{n\log n})$ (see~\cite{AP12}). The
complexity status of the optimization problem is unknown.

Instead of lines and hyperplanes it is also possible to consider 
Euclidean or spherical arrangements of pseudo-lines and pseudo-hyperplanes,
their duals are still partial cubes. Actually, all the cited results apply
to the case of arrangements of pseudo-lines.

Spherical arrangements of pseudo-hyperplanes are equivalent to
oriented matroids, this is the Topological Representation Theorem of
Folkman and Lawrence. The pseudo-hyperplanes correspond to the elements of the
oriented matroid and the cells of the arrangement correspond to the
topes of the oriented matroid. Hence our covering problem
asks for a minimum size set $C$ of elements such that for every tope $T$
there is an element $c\in C$ such that $T \oplus c$ is another tope.
For more on oriented matroids we refer to~\cite{blwsz-om-93}.    

\subsection{Acyclic Orientations} 

\begin{wrapfigure}[15]{r}{0.44\textwidth}%
\vskip-16mm
\includegraphics[width=0.44\textwidth]{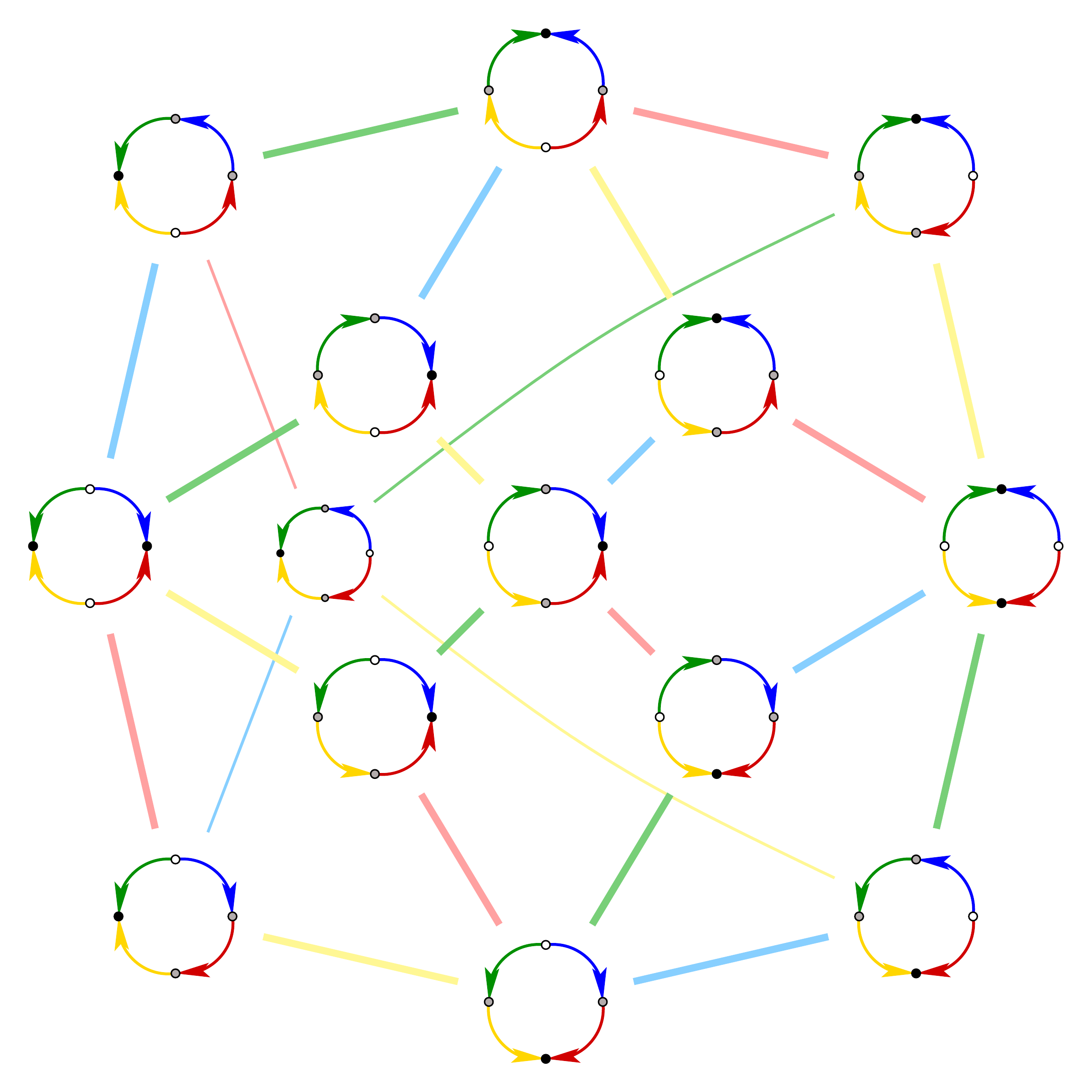}
\caption{Flip graph of acyclic orientations of the 4-cycle.} 
\label{fig:aoC4}
\end{wrapfigure}
From a graph $G=(V,E)$, we can define a partial cube $H$ in which
every vertex is an acyclic orientation of $G$, and two orientations
are adjacent whenever they differ by a single edge reversal ({\it flip}). 
This partial cube has dimension equal to the number of edges
of $G$. It is also the dual graph of the arrangement of the $|E|$
hyperplanes of equation $x_i = x_j$ for $ij\in E$ in
$\mathbb{R}^{|V|}$. Every cell of this arrangement corresponds to an
acyclic orientation, and adjacent cells are exactly those that differ
by a single edge. This observation, in particular, shows that $H$ is
connected.
The graph $H$ and the notion of edge flippability have been studied by
Fukuda et al.~\cite{FPS01} and more recently by Cordovil and
Forge~\cite{CF12}.

The partial cube covering problem now becomes the following: given a
graph $G=(V,E)$, find a smallest subset $S\subseteq E$ such that for
every acyclic orientation of $G$, there exists $e\in S$ such that
flipping the orientation of $e$ does not create a cycle.

\subsection{Median Graphs}

A median graph is an undirected graph in which every three vertices
$x$, $y$, and $z$ have a unique {\it median}, i.e., a vertex
$\mu(x,y,z)$ that belongs to shortest paths between each pair of $x$,
$y$, and $z$. Median graphs form structured subclass of partial cubes
that has been studied extensively, see~\cite{Chen12} and references
therein. Since the star and the hypercube are median graph there are
no non-trivial upper or lower bounds for the zone cover problem on
this class. We use a construction of median graphs 
to prove hardness of the zone cover problem.

\subsection{Distributive Lattices} 

\begin{wrapfigure}[12]{r}{0.44\textwidth}%
\vskip-9mm
\includegraphics[width=0.44\textwidth]{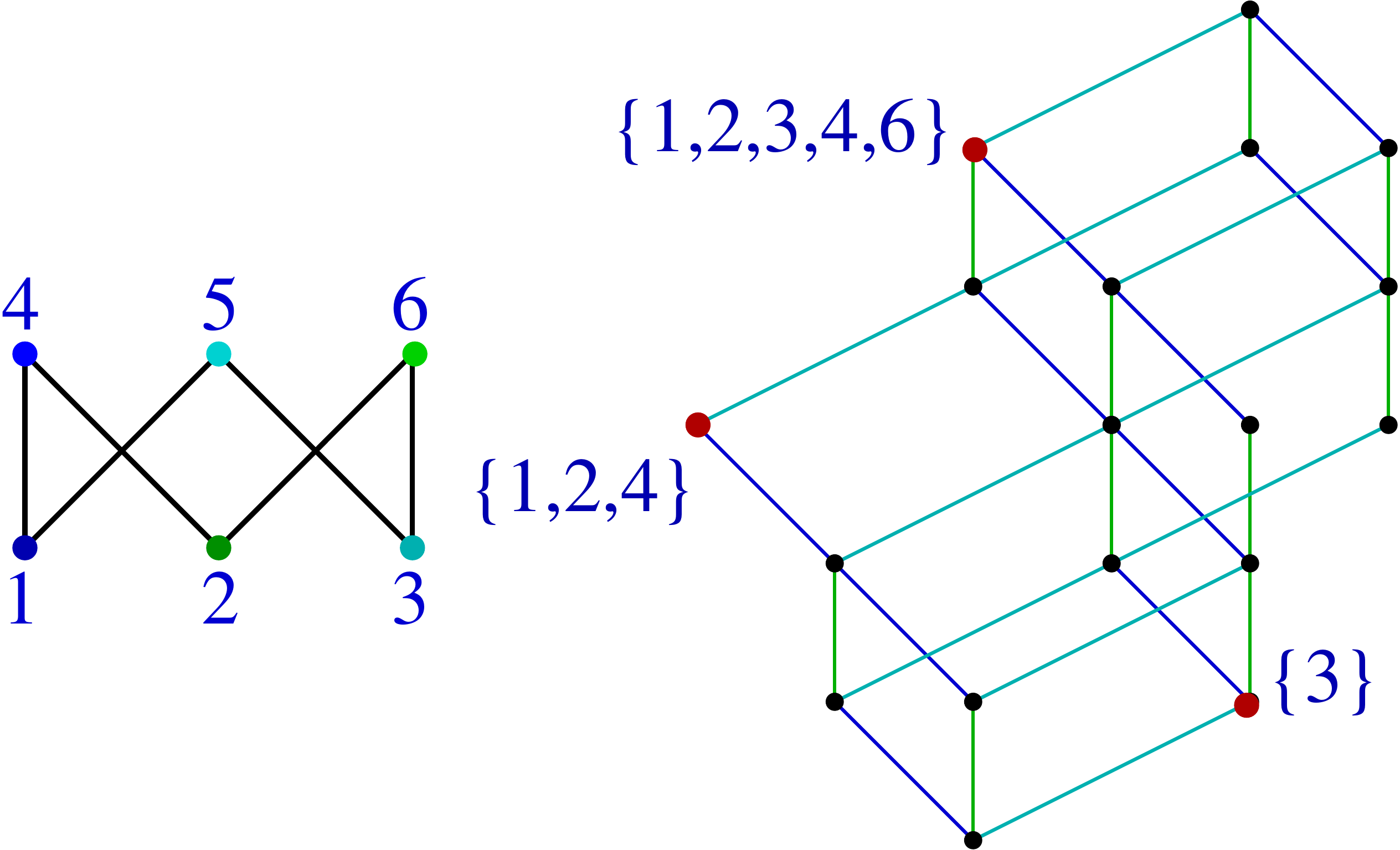}
\caption{A poset (left) and its lattice of down-sets (right).} 
\label{fig:poset}
\end{wrapfigure}
Cover graphs of distributive lattices are partial cubes. From
Birkhoff's representation theorem (a.k.a.~Fundamental Theorem of
Finite Distributive Lattices) we know that there is a poset $P$ such that 
the vertices of the partial cube are the downsets of $P$, the zones 
 of the partial cube in turn correspond to the elements of $P$.

The problem becomes: given a poset $P$, find a smallest subset
$S$ of its elements such that for every downset $D$ of $P$, there
exists $v\in S$ such that either $D\cup\{v\}$ or $D\setminus \{v\}$ is
a downset, distinct from $D$.

\subsection{Trees} 

Trees with $n$ edges are partial cubes of dimension exactly
$n$. Since every zone contains exactly one edge, the
partial cube covering problem on trees boils down to the edge cover
problem on trees. There are instances attaining the lower and upper bounds of
$n/2$ and $n-1$, moreover, there is a simple dynamic programming algorithm 
that computes an optimal cover in linear time. 

\begin{table}
\begin{center}
\footnotesize
\newcolumntype{L}[1]{>{\small\raggedright\arraybackslash}p{#1}}
\newcolumntype{C}[1]{>{\small\centering\arraybackslash}p{#1}}
\begin{tabular}{|L{.30\textwidth}|C{.14\textwidth}|C{.15\textwidth}|L{.31\textwidth}|}
\hline
Partial Cube & Lower bound & Upper bound & Complexity \\
\hline
\hline
Arrangements of $n$ lines\break
in $\mathbb{R}^2$ &
\vskip-5pt $n-o(n)$
\hbox{(Thm. \ref{thm:exlb})} &
\vskip-5pt
\hbox{$n-\Omega (\sqrt{n\log n})$} \cite{AP12} &
\vskip-3pt {\hfil -- \hfil} \\
\hline
Arrangements of $n$ hyperplanes in $\mathbb{R}^d$ &
\vskip-3pt -- &
\vskip-3pt $n-\Omega (n^{1/d})$ 
\hbox{(Thm. \ref{thm:ubhyper})}&
\vskip-3pt {\hfil -- \hfil} \\
\hline
\vskip-5pt Acyclic orientations &
\vskip2pt -- & Minimum edge cut (Thm.~\ref{thm:cut}) &
Recognition is coNP-complete, even for complete graphs (Thm.~\ref{thm:recgs}) \\
\hline
\vskip-5pt Median graphs &
\vskip-3pt 1 &
\vskip-3pt $n$ & NP-complete (Cor.~\ref{cor:medfromtri}), APX-hard (Cor.~\ref{cor:medfromtriapx}) \\
\hline
Distributive lattices with representative poset of $n$ elements &
\vskip2pt -- &
\vskip-3pt $2n/3$ (Cor.~\ref{cor:fibre}) &
Recognition is coNP-complete (Cor.~\ref{cor:recposet})
 $\Sigma_2^P$-complete (Cor.~\ref{cor:findposet}) \\
\hline
Trees with $n$ edges & $n/2$ & $n-1$ & P (min edge cover)\\
\hline
\end{tabular}
\end{center}
\caption{\label{tab:res}Worst-case bounds and complexities for the special cases of the partial cube covering problem. When used, $n$ denotes the dimension of the partial cube.}
\end{table}
\section{Line Arrangements}

We first give a lower bound on the size of a guarding set for an arrangement of lines.

\begin{theorem}
\label{thm:exlb}
The minimum number of lines needed to guard the cells of any arrangement of $n$ lines is $n-o(n)$.
\end{theorem}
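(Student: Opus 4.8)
The plan is to translate the guarding problem into a hypergraph independence question and then to exhibit an arrangement whose associated hypergraph has sublinear independence number.

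First I would set up the following correspondence. Given an arrangement $\mathcal{A}$ of $n$ lines, let $H(\mathcal{A})$ be the hypergraph on vertex set $\mathcal{A}$ whose hyperedges are the line sets bounding the cells of $\mathcal{A}$ (one hyperedge per cell). A set $S\subseteq\mathcal{A}$ guards $\mathcal{A}$ precisely when $\mathcal{A}\setminus S$ contains no hyperedge of $H(\mathcal{A})$, i.e.\ when $\mathcal{A}\setminus S$ is independent in $H(\mathcal{A})$; hence the minimum size of a guarding set equals $n-\alpha(H(\mathcal{A}))$, where $\alpha$ is the independence number. So the theorem is equivalent to the existence of arrangements of $n$ lines (or pseudolines, since all cited results extend to them and wiring diagrams allow more freedom) with $\alpha(H(\mathcal{A}))=o(n)$; concretely one must build an arrangement in which \emph{every} set of $\varepsilon n$ lines jointly bounds some cell.

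Next I would reduce to triangular cells: it suffices to build an arrangement in which any $\varepsilon n$ lines contain three that bound a triangular cell, i.e.\ whose $3$-uniform subhypergraph $T(\mathcal{A})$ of triangular cells satisfies $\alpha(T(\mathcal{A}))=o(n)$. Two facts shape the target. A fixed pair of lines is incident to at most a constant number of triangular cells, namely those inside the sectors at their crossing point, so $T(\mathcal{A})$ is essentially linear and has only $O(n^2)$ hyperedges; by the standard independence bounds for sparse $3$-uniform hypergraphs this already forces $\alpha(T(\mathcal{A}))=\Omega(\sqrt{n\log n})$, matching the $n-\Omega(\sqrt{n\log n})$ upper bound of \cite{AP12}. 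Conversely a $3$-uniform hypergraph with $\Theta(n^2)$ well-distributed hyperedges can have independence number as small as $\Theta(\sqrt{n\log n})$. So the construction must carry $\Theta(n^2)$ triangular cells that are ``spread out'' over the lines; I would look for such an arrangement either as the dual of a structured point set (for instance points placed on a few lines, so that the dual is a union of many pencils), or via an explicit or randomized wiring diagram, and then argue that no $\varepsilon n$ of its lines can avoid all of its triangular cells. Since we only need deficiency $o(n)$, a construction reaching, say, $\alpha=O(n^{1-\delta})$ or $\alpha=O(n/\log n)$ already suffices, leaving substantial slack.

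The main obstacle is the tension between geometric realizability and small independence number: the ``ideal'' dense near-linear triangle-hypergraph that would match the $\sqrt{n\log n}$ bound does not obviously arise from any line or pseudoline arrangement, so one must design a realizable family and then \emph{prove} its triangle-hypergraph still has sublinear independence number. I expect the argument to split into (i) describing the arrangement and checking that the claimed triples are genuine cells --- that no fourth line pierces the corresponding triangle --- which is routine but needs care, and (ii) the combinatorial core: showing that every large line subset of the construction must fully contain one of these triangular cells, presumably via a counting or probabilistic-deletion argument exploiting the regularity of the construction. Step (ii) is where the real difficulty lies.
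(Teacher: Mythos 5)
Your setup is correct and matches the paper's: the minimum guarding number is $n-\alpha$, where $\alpha$ is the maximum size of an ``independent'' set of lines bounding no cell entirely, and the right move is to pass to triangular cells and to the dual point set, where three lines bounding a (perturbed) triangular cell correspond to three collinear points. But the proof stops exactly at the step that carries all the content. What you must produce is, for every $\varepsilon>0$ and large $n$, a planar point set with no four collinear in which \emph{every} subset of size $\varepsilon n$ contains a collinear triple --- equivalently, whose largest general-position subset has size $o(n)$. You label this step (ii), call it ``the real difficulty,'' and offer only candidate avenues (structured duals, pencils, counting or probabilistic-deletion arguments), none of which is carried out and none of which can work as described: probabilistic deletion gives \emph{lower} bounds on independence numbers, not upper bounds; and putting many points on a few lines either violates the no-four-collinear condition or fails to force a collinear triple inside every large subset.

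The paper closes this gap by quoting a known, and genuinely deep, result: F\"uredi's observation that the density Hales--Jewett theorem (applied to $[3]^d$, projected generically to the plane so that the combinatorial lines become collinear triples and no four points are collinear) yields point sets whose largest general-position subset has size $o(n)$. This is the entire content of Theorem~\ref{thm:exlb}; the surrounding reduction, which you do reproduce, is the easy part already observed in~\cite{AP12}. Two further miscalibrations in your discussion are worth flagging. First, the ``substantial slack'' you hope for does not exist: obtaining $\alpha=O(n^{1-\delta})$ for the general-position problem is a well-known open question, and even $\alpha=o(n)$ is only known via density Hales--Jewett, so there is no elementary counting construction waiting to be found. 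Second, the heuristic that one needs $\Theta(n^2)$ well-spread triangular cells is misleading --- the Hales--Jewett point sets have far fewer collinear triples (on the order of $n^{\log 4/\log 3}$), yet still pierce every positive-density subset. As it stands, the proposal is a correct reduction plus an unproved (and nontrivial) combinatorial core, so it does not constitute a proof.
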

\begin{proof}
The proof is a direct consequence of known results on the following problem from Erd\H{o}s:
given a set of points in the plane with no four points on a line, find the largest subset in general position, that is, with no three points on a line. 
Let $\alpha (n)$ be the minimum size of such a set over all arrangements of $n$ points. 
F\"uredi observed~\cite{F91} that $\alpha (n) = o(n)$ follows from the density version of the Hales-Jewett theorem~\cite{FK89}. 
But this directly proves that we need at least $n - o(n)$ lines to guard all cells of an arrangement. 
The reduction is the one observed by Ackerman et al.~\cite{AP12}: consider the line arrangement that is dual to the point set, and slightly perturb it so 
that each triple of concurrent lines forms a cell of size three. Now the complement of any guarding set is in general position in the primal point set.\qed
\end{proof}

We now show that for arrangements of hyperplanes in $\mathbb{R}^d$, with $d=O(1)$, 
there always exists a guarding set of size at most $n-\Omega (n^{1/d})$. This generalizes
a result for $d=2$ obtained in \cite{BCHKLT12}.

\begin{theorem}
\label{thm:ubhyper}
In every arrangement of hyperplanes in general position in $\mathbb{R}^d$ (with $d=O(1)$),
there exists a subset of the hyperplanes of size at most $n-\Omega (n^{1/d})$ such that
every cell is bounded by at least one hyperplane in the subset. 
\end{theorem}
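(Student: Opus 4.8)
The plan is to argue about the complement of the guarding set. Write $T := \mathcal{A}\setminus S$ for the set of hyperplanes we throw away. By definition, $S$ guards $\mathcal{A}$ if and only if no cell of $\mathcal{A}$ has \emph{all} of its facet‑supporting hyperplanes in $T$. It is convenient to reformulate this once more: letting $\mathcal{A}_T$ denote the arrangement of the hyperplanes in $T$ alone, the condition is equivalent to requiring that every cell of $\mathcal{A}_T$ be crossed, in its interior, by some hyperplane of $\mathcal{A}\setminus T$. Indeed, if $c$ is a cell of $\mathcal{A}$ whose facet hyperplanes all lie in $T$, then inspecting $\mathcal{A}$ in a sufficiently small ball around any boundary point of $c$ shows that no hyperplane outside $T$ comes near $c$ (such a hyperplane would have to support a facet of $c$), so $c$ is already a cell of $\mathcal{A}_T$ and no hyperplane of $\mathcal{A}\setminus T$ crosses it; conversely, a cell of $\mathcal{A}_T$ crossed by no hyperplane of $\mathcal{A}\setminus T$ avoids all of $\mathcal{A}$ and is therefore itself a cell of $\mathcal{A}$, with all facets in $T$. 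Thus it suffices to exhibit, for every arrangement of $n$ hyperplanes in general position in $\mathbb{R}^d$, a set $T\subseteq\mathcal{A}$ with $|T|=\Omega(n^{1/d})$ such that no cell of $\mathcal{A}_T$ survives as a cell of $\mathcal{A}$.

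Two remarks guide the choice of $T$. First, since $d=O(1)$, the sub‑arrangement $\mathcal{A}_T$ of $t:=|T|$ hyperplanes has only $\Theta(t^d)$ cells, while roughly $n-t\approx n$ hyperplanes of $\mathcal{A}\setminus T$ are available to cross all of them. Second, by general position every cell of $\mathcal{A}$ has at least $d$ facets: a putative cell with $j<d$ facets is a cone over a $(d{-}j)$‑flat, hence invariant under a $(d{-}j)$‑dimensional space of translations, and any further hyperplane restricts to a non‑constant affine function on these translates and so crosses the cell. Both remarks say that we should pick $T$ so that $\mathcal{A}_T$ is as tame as possible. A natural candidate is a set of $t=\Theta(n^{1/d})$ hyperplanes whose normal directions all lie in a single small spherical cap---such a cap containing $t$ of the $n$ normals exists by averaging over a partition of the sphere of directions into $O(\varepsilon^{-(d-1)})$ caps---so that the hyperplanes of $T$ are pairwise nearly parallel and, in any bounded region, $\mathcal{A}_T$ behaves like $t$ parallel hyperplanes: about $t$ slab‑shaped unbounded cells, with the genuine crossings of $\mathcal{A}_T$ pushed out towards infinity. (Alternatively one may build $T$ greedily, adding a hyperplane as long as this keeps the property that no cell of $\mathcal{A}$ is enclosed, and bound the number of ``blocked'' hyperplanes at each step.) Whatever the construction, the slab‑like unbounded cells of $\mathcal{A}_T$ are crossed by essentially every hyperplane not parallel to the slab direction, and general position guarantees many such hyperplanes, so the real work is confined to the bounded cells of $\mathcal{A}_T$.

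The delicate step---the one I expect to be the main obstacle---is precisely this last point: showing that the $\approx n$ hyperplanes of $\mathcal{A}\setminus T$ stab \emph{every} one of the $\Theta(t^d)$ cells of $\mathcal{A}_T$, so that none survives as a cell of $\mathcal{A}$. The natural way to organize it is a charging argument that assigns to each cell of $\mathcal{A}_T$ a distinct hyperplane of $\mathcal{A}\setminus T$ crossing it, using general position to make the assignment injective; a Hall‑type count then forces the number of cells of $\mathcal{A}_T$, namely $\Theta(t^d)$, to be $O(n)$, which is exactly what pins the threshold at $t=\Theta(n^{1/d})$. The subtlety is that one cannot afford to lose a single cell, so the argument must genuinely exploit the tame structure of $\mathcal{A}_T$; this is how the two‑dimensional bound $n-\Omega(\sqrt{n})$ of~\cite{BCHKLT12} is obtained, and the generalization to $\mathbb{R}^d$ follows the same pattern, essentially dimension by dimension. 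The barrier $\Theta(t^d)=\Theta(n)$ is also consistent with the fact that it is open, already for $d=2$, whether $\Omega(n)$ hyperplanes can always be discarded while still guarding all cells.
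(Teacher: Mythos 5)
Your reformulation in terms of the complement $T$ (what the paper calls an \emph{independent set}) is exactly right, and your observation that every cell of an arrangement in general position has at least $d$ facets is correct and relevant. But the proof stops at what you yourself call ``the delicate step'': you never actually produce a set $T$ of size $\Omega(n^{1/d})$ with the required property. The nearly-parallel construction does not obviously work --- the $\Theta(t^d)$ cells of $\mathcal{A}_T$ formed where the nearly-parallel hyperplanes eventually do cross (far from the origin, but still present) must each be stabbed by some hyperplane of $\mathcal{A}\setminus T$, and nothing in your argument guarantees this; ``pushed out towards infinity'' does not make those cells disappear. The proposed ``Hall-type count'' assigning a distinct crossing hyperplane to each cell is also not the right shape of argument: one hyperplane may legitimately serve many cells, and an injective assignment is neither necessary nor, as far as I can see, obtainable; a counting argument of this form can only show that a large $T$ is not \emph{ruled out}, not that a valid $T$ exists.

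The missing idea --- which your parenthetical greedy remark brushes against without carrying out --- is to take an \emph{inclusionwise maximal} independent set $I$ and bound $|H\setminus I|$ from above, rather than constructing $I$ explicitly. Maximality means that for each $h\in H\setminus I$ there is a witness cell $c_h$ bounded only by $h$ and hyperplanes of $I$. One then charges $h$ to a feature of the sub-arrangement induced by $I$: if $c_h$ has at least $d+1$ facets, to a vertex of that sub-arrangement lying on $c_h$ (each vertex absorbs at most $2^d$ charges, since it is incident to $2^d$ cells); if $c_h$ is a simplicial cone with exactly $d$ facets, to the $(d-1)$-tuple of hyperplanes of $I$ bounding it. Since the sub-arrangement induced by $I$ has $O(|I|^d)$ vertices and $O(|I|^{d-1})$ such tuples, this gives $n-|I| \le O(|I|^d)$, hence $|I|=\Omega(n^{1/d})$. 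Without this charging step (or an equivalent bound on the number of ``blocked'' hyperplanes at the end of your greedy process), the proof is incomplete.
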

\begin{proof}
We will prove that every such arrangement has an independent set of size $\Omega (n^{1/d})$, where
an independent set is defined as the complement of a guarding set, that is, 
a subset of the hyperplanes such that no cell is bounded by hyperplanes of the subset only.

Let $H$ be a set of $n$ hyperplanes, and consider an arbitrary, inclusionwise maximal independent set $I$.
For each hyperplane $h\in H\setminus I$, there must be a cell $c_h$ of the arrangement that is bounded by a set of hyperplanes $C\cup\{h\}$ with
$C\subseteq I$, since otherwise we could add $h$ to $I$, and $I$ would not be maximal. If $c_h$ has size at least $d+1$, there must be a vertex of
the arrangement induced by $I$ that is also a vertex of $c_h$. We charge $h$ to this vertex. Note that each vertex can only be charged $2^d=O(1)$ times. 
If $c_h$ has size $d$, we charge $h$ to the remaining $(d-1)$-tuple of hyperplanes of $I$ bounding $c_h$. This tuple can also be charged at most $O(1)$
times. Therefore
\begin{eqnarray}
|H\setminus I| = n-|I| & \leq & 2^d \cdot O(|I|^d) + O(|I|^{d-1}) \\
|I| & = & \Omega (n^{1/d}) .
\end{eqnarray}       
\qed\end{proof}

\section{Acyclic Orientations}

Given a graph $G=(V,E)$ we wish to find a subset $S\subseteq E$ such
that for every acyclic orientation of $G$, there exists a {\em
  flippable} edge $e\in S$, that is, an edge $e\in S$ such that
changing the orientation of $e$ does not create any cycle. Let us call
such a set a {\em guarding set} for $G$. Note that an oriented edge
$e=uv$ in an acyclic orientation is flippable if and only if it is not
{\em transitive}, that is, if and only if $uv$ is the only oriented
path from $u$ to $v$.

\subsection{Complexity}

\begin{theorem}
\label{thm:recgs}
Given a graph $G=(V,E)$ and a subset $S\subseteq E$ of its edges, the
problem of deciding whether $S$ is a guarding set is coNP-complete,
even if $G$ is a complete graph.
\end{theorem}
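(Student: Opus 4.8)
The plan is to establish membership in coNP by a short certificate argument, and coNP-hardness by a direct reduction from the Hamiltonian Path problem. The key observation driving both parts is that an acyclic orientation of the complete graph $K_n$ is nothing but a linear order of its vertex set, and an oriented edge of $K_n$ is flippable precisely when its two endpoints are consecutive in that order.

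For membership in coNP, I would look at the complementary question: is there an acyclic orientation of $G$ in which \emph{every} edge of $S$ is transitive (non-flippable)? Such an orientation is itself a polynomial-size certificate: one checks in polynomial time that it is acyclic (topological sort), and that for each edge $uv\in S$ there is a directed $u$-$v$ path distinct from the edge itself, i.e.\ that $v$ is still reachable from $u$ after deleting the arc $uv$. Hence the complement is in NP, so the original problem is in coNP.

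For hardness I restrict to $G=K_n$. Fix a linear order $v_1,\dots,v_n$ of $V$; in the corresponding acyclic orientation the arc from $v_i$ to $v_j$ (with $i<j$) is flippable if and only if $j=i+1$: if some vertex lies strictly between $v_i$ and $v_j$ in the order, the arc is forced by transitivity, and otherwise flipping it creates no cycle, so it is the unique $v_i$-$v_j$ path. Consequently the orientation given by $v_1,\dots,v_n$ contains a flippable edge of $S$ if and only if some consecutive pair $\{v_i,v_{i+1}\}$ lies in $S$. Therefore $S$ fails to be a guarding set exactly when there is a permutation of $V$ all of whose consecutive pairs avoid $S$, which is exactly a Hamiltonian path in the graph $H:=\bigl(V,\binom{V}{2}\setminus S\bigr)$.

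The reduction is then immediate: given an instance $H$ of Hamiltonian Path on vertex set $V$, take $G$ to be the complete graph on $V$ and set $S:=\binom{V}{2}\setminus E(H)$; by the previous paragraph $S$ is a guarding set if and only if $H$ has no Hamiltonian path. Since Hamiltonian Path is NP-complete, recognizing guarding sets is coNP-hard already for complete graphs, and combined with the coNP-membership above it is coNP-complete. I do not expect a real obstacle here; the only point demanding care is the exact characterization of flippability in an orientation of $K_n$ — consecutive-in-the-order versus transitive — since it is this equivalence that makes the correspondence with Hamiltonian paths tight.
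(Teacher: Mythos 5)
Your proof is correct and follows essentially the same route as the paper: reduce from Hamiltonian Path by taking $G=K_n$ and $S$ the set of non-edges of $H$, using the fact that acyclic orientations of $K_n$ are linear orders in which an edge is flippable exactly when its endpoints are consecutive. You are somewhat more explicit than the paper about coNP membership and about the flippable-iff-consecutive characterization, but the core argument is identical.
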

\begin{proof}
  The set $S$ is not a guarding set if and only if there exists an
  acyclic orientation of $G$ in which all edges $e\in S$ are
  transitive.

  Consider a simple graph $H$ on the vertex set $V$, and define $G$ as
  the complete graph on $V$, and $S$ as the set of non-edges of $H$.
  We claim that $S$ is a guarding set for $G$ if and only if $H$ does
  not have a Hamilton path. Since deciding the existence of a
  Hamilton path is NP-complete, this proves the result.

  To prove the claim, first suppose that $H$ has a Hamilton path,
  and consider the acyclic orientation of $G$ that corresponds to the
  order of the vertices in the path.  Then by definition, no edge of
  $S$ is in the path, hence all of them are transitive, and $S$ is not
  a guarding set.  Conversely, suppose that $S$ is not guarding. Then
  there exists an acyclic orientation of $G$ in which all edges of $S$
  are transitive.  The corresponding ordering of the vertices in $V$
  yields a Hamilton path in $H$.\qed
\end{proof}

\subsection{Special cases and an upper bound}

\begin{lemma}
\label{lem:complete}
The minimum size of a guarding set in a complete graph on $n$ vertices is $n-1$.
\end{lemma}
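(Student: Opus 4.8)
The plan is to recognize that this lemma is exactly the ``hitting a consecutive pair'' puzzle from the introduction, and to settle it by translating it into a question about Hamilton paths. Specializing the reduction used in the proof of Theorem~\ref{thm:recgs} to $G=K_n$: the acyclic orientations of $K_n$ are the linear orders of $V$, and an edge is flippable in such an orientation precisely when its two endpoints are consecutive in the corresponding order. Hence a set $S$ of edges is a guarding set for $K_n$ if and only if every linear order of $V$ has a consecutive pair lying in $S$; equivalently, writing $H$ for the graph on $V$ whose edge set is $\binom{V}{2}\setminus S$, the set $S$ is guarding if and only if $H$ has no Hamilton path. Consequently the minimum size of a guarding set equals $\binom{n}{2}$ minus the maximum number of edges of an $n$-vertex graph with no Hamilton path, so, since $\binom{n}{2}-\binom{n-1}{2}=n-1$, it suffices to prove that this extremal number equals $\binom{n-1}{2}$.

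For the construction (giving the upper bound $n-1$) I would take $H$ to be a $K_{n-1}$ together with an isolated vertex: it has $\binom{n-1}{2}$ edges and obviously no Hamilton path, and the corresponding $S$ is a star $K_{1,n-1}$, which is indeed guarding because every linear order puts the centre of the star next to some vertex. For the matching lower bound, suppose $H$ has at least $\binom{n-1}{2}+1$ edges, so its complement $\overline H$ has at most $n-2$ edges. If $u,v$ are non-adjacent in $H$, then $uv\in\overline H$, and the number of edges of $\overline H$ incident to $u$ or to $v$ equals $\deg_{\overline H}(u)+\deg_{\overline H}(v)-1$, which is at most $n-2$; therefore $\deg_{\overline H}(u)+\deg_{\overline H}(v)\le n-1$ and hence $\deg_H(u)+\deg_H(v)=2(n-1)-\deg_{\overline H}(u)-\deg_{\overline H}(v)\ge n-1$. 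By the Ore-type sufficient condition for traceability (which follows from Ore's theorem for Hamilton cycles applied to $H$ with an added universal vertex), $H$ has a Hamilton path, contradicting the assumption. This pins the extremal number down to $\binom{n-1}{2}$ and the lemma follows.

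The only step requiring genuine care is the last one: that removing at most $n-2$ edges from $K_n$ cannot destroy every Hamilton path. I would obtain it from the degree-sum (Ore) argument above rather than by a direct greedy construction of the surviving path, since a naive greedy approach is derailed by the possibility that a few vertices have large degree in the deleted graph $\overline H$. Everything else is routine bookkeeping with binomial coefficients (in particular the Pascal identity $\binom{n}{2}=\binom{n-1}{2}+(n-1)$, used twice).
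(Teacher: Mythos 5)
Your proposal is correct. The overall framing coincides with the paper's: both reduce the lemma to the statement that a set $S$ of at most $n-2$ edges of $K_n$ cannot be guarding because the complement graph $H$ on $\binom{n-1}{2}+1$ or more edges must contain a Hamilton path (the paper makes this translation explicit in the proof of Theorem~\ref{thm:recgs} and reuses it here), and both use the star $K_{1,n-1}$ for the upper bound. Where you differ is in how the extremal step is established. The paper proves ``every $n$-vertex graph with at least $\binom{n}{2}-(n-2)$ edges is traceable'' by induction on $n$: pick a vertex $v$ meeting few edges of $S$, take a Hamilton path on the remaining $n-1$ vertices, and insert $v$ into one of the $n$ available slots, at most $n-1$ of which are blocked. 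You instead invoke the Ore-type traceability criterion: for non-adjacent $u,v$ in $H$ the edge $uv$ lies in $\overline H$, so inclusion--exclusion gives $\deg_{\overline H}(u)+\deg_{\overline H}(v)\le n-1$ and hence $\deg_H(u)+\deg_H(v)\ge n-1$, and adding a universal vertex reduces traceability to Ore's theorem for Hamilton cycles. Both arguments are sound (your degree-sum computation and the handling of the doubly counted edge $uv$ check out, and the small cases $n\le 2$ are trivial); yours is shorter but imports a classical theorem, while the paper's induction is self-contained and, notably, is the template that is reused almost verbatim in the proof of the chordal-graph generalization (Theorem on guarding sets versus minimum edge cuts), so the paper's choice is not accidental.
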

\begin{proof}
  First note that there always exists a guarding set of size $n-1$
  that consists of all edges incident to one vertex.

  Now we need to show that any other edge subset of smaller size is
  not a guarding set. This amounts to stating that every graph with at
  least ${n\choose 2} - (n-2)$ edges has a Hamilton path. To see this,
  proceed by induction on $n$.  Suppose this holds for graphs with
  less than $n$ vertices. Consider a set $S$ of at most $n-2$ edges of
  the complete graph. Let $u,v$ be two vertices with $uv\in
  S$.  One of the two vertices, say $v$, is incident to at most
  $\lfloor (n-2)/2\rfloor$ edges of $S$. Consider a Hamilton path on
  the $n-1$ vertices $\neq v$, which exists by induction
  hypothesis. Then vertex $v$ must have one or two incident edges that do not
  belong to $S$ and connect $v$ to the first, or the last, or two
  consecutive vertices of this path. Hence we can integrate $v$ in
  the Hamilton path.\qed
\end{proof}

Since acyclic orientations of the complete graph $K_n$ correspond to
the permutations of $S_n$, this is in fact the solution to the
puzzle ``hitting a consecutive pair" in the introduction.  The dual
graph of the arrangements of hyperplanes corresponding to the complete
graph is known as the permutohedron.  Hence the above result can also
be stated in the following form.

\begin{corollary}
  The minimum size of a set of zones covering the vertices of the
  $n$-dimensional permutohedron is $n-1$.
\end{corollary}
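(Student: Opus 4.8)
The plan is to observe that the statement is merely a change of vocabulary applied to Lemma~\ref{lem:complete}. Recall that the $n$-dimensional permutohedron is the dual graph of the braid arrangement $\{x_a=x_b : 1\le a<b\le n\}$, which is precisely the hyperplane arrangement associated with the complete graph $K_n$ in Section on acyclic orientations. Under this identification the vertices of the permutohedron are the acyclic orientations of $K_n$ (equivalently, the linear orders on $[n]$), and the zones of the permutohedron are the edges of $K_n$: the zone indexed by $ab$ consists of all edges of the permutohedron dual to the hyperplane $x_a=x_b$.

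First I would spell out exactly when a zone covers a vertex. A vertex $v$ of a partial cube is covered by zone $i$ precisely when flipping coordinate $i$ yields another vertex; for the permutohedron this means that, starting from the chamber $v$, one can cross the hyperplane $x_a=x_b$ into an adjacent chamber, which happens exactly when $a$ and $b$ are consecutive in the linear order $v$ — that is, exactly when the edge $ab$ is flippable in the corresponding acyclic orientation of $K_n$. Consequently a set $S$ of zones covers every vertex of the permutohedron if and only if, for every acyclic orientation of $K_n$, some edge of $S$ is flippable, i.e. if and only if $S$ is a guarding set for $K_n$ in the sense of the previous section.

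Then the corollary follows at once: by Lemma~\ref{lem:complete} the minimum size of a guarding set in $K_n$ is $n-1$, hence the minimum number of zones covering the permutohedron is $n-1$. In particular the $n-1$ zones corresponding to the edges incident to a fixed vertex of $K_n$ give an optimal cover, and optimality is exactly the content of the Hamilton-path argument already carried out in the proof of Lemma~\ref{lem:complete}. There is no genuine obstacle here; the only point needing a little care is the dimension bookkeeping — the permutohedron on $n$ symbols is an $(n-1)$-dimensional polytope but, viewed as a partial cube, it carries $\binom{n}{2}$ zones, and it is among these $\binom{n}{2}$ zones that we are selecting the $n-1$. All of the real work has already been done in Lemma~\ref{lem:complete}.
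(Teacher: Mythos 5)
Your proposal is correct and follows exactly the route the paper takes: the corollary is obtained as a direct translation of Lemma~\ref{lem:complete} via the identification of the permutohedron with the dual graph of the braid arrangement of $K_n$, of its zones with the edges of $K_n$, and of ``zone covers vertex'' with ``edge is flippable in the acyclic orientation.'' Your explicit remark on the dimension bookkeeping is a helpful clarification of a point the paper leaves implicit, but the substance of the argument is the same.
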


We now give a simple, polynomial-time computable upper bound on the size of a guarding set.
A set $C\subseteq E$ is an {\em edge cut} whenever the graph $(V,E\setminus C)$ is not connected.

\begin{theorem}
\label{thm:cut}
Every edge cut of $G$ is a guarding set of $G$.
\end{theorem}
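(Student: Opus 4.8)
The plan is to rely on the characterisation recalled just before the statement: in an acyclic orientation, an oriented edge $u\to v$ is flippable exactly when it is not transitive, i.e.\ when $u\to v$ is the only directed $u$--$v$ path. So, given an edge cut $C$, it suffices to show that \emph{every} acyclic orientation of $G$ has a non-transitive edge lying in $C$. First I would convert the cut into a vertex bipartition: let $A$ be the vertex set of one connected component of $(V,E\setminus C)$ and let $B=V\setminus A$. Both parts are non-empty, and every edge of $G$ with one endpoint in $A$ and one in $B$ must belong to $C$ --- otherwise it would survive in $E\setminus C$ and merge $A$ with another component. Write $F\subseteq C$ for this set of crossing edges; since $G$ is connected, $F\neq\emptyset$, and it is enough to produce a non-transitive edge of $F$ in each acyclic orientation.

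The key step is choosing the right candidate edge. Fix an acyclic orientation of $G$ together with a linear extension of it, so that each vertex $w$ receives a position $\mathrm{pos}(w)$ and every directed edge increases the position. For a crossing edge oriented $x\to y$ put $\mathrm{gap}(x\to y)=\mathrm{pos}(y)-\mathrm{pos}(x)>0$, and let $e^\ast=u\to v$ be a crossing edge of minimum gap (one exists since $F\neq\emptyset$). I claim $e^\ast$ is non-transitive. If not, there is a directed path $u=x_0\to x_1\to\cdots\to x_\ell=v$ with $\ell\ge 2$, and since $u$ and $v$ lie in different parts of $(A,B)$ some edge $x_i\to x_{i+1}$ of this path joins the two parts, hence lies in $F$. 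As positions strictly increase along the path, $\mathrm{gap}(x_i\to x_{i+1})\le\mathrm{pos}(v)-\mathrm{pos}(u)=\mathrm{gap}(e^\ast)$, with equality only if $x_i\to x_{i+1}$ is the entire path --- which is excluded because $\ell\ge 2$. So $\mathrm{gap}(x_i\to x_{i+1})<\mathrm{gap}(e^\ast)$ with $x_i\to x_{i+1}\in F$, contradicting the minimality of $e^\ast$. Hence $e^\ast\in F\subseteq C$ is non-transitive, therefore flippable, and since the orientation was arbitrary, $C$ is a guarding set.

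I do not anticipate a genuine obstacle beyond getting this choice right. The natural first attempts --- taking a crossing edge whose head is earliest, or whose tail is latest, in the linear extension --- fail, because a path witnessing transitivity of $e^\ast$ may leave and re-enter the part of the bipartition not containing $v$ and thus use a different crossing edge that shares its head or its tail with $e^\ast$; measuring crossing edges by the span $\mathrm{pos}(y)-\mathrm{pos}(x)$ is what forces any such witnessing crossing edge to be strictly nested inside $e^\ast$. (An equivalent variant avoids positions altogether: order the crossing edges by the relation putting $x\to y$ below $x'\to y'$ when $x'$ reaches $x$ and $y$ reaches $y'$ in the orientation --- acyclicity makes this a partial order --- and take a minimal element; the same path argument applies.) I would also point out in the write-up that the statement tacitly assumes $G$ connected, since otherwise $\emptyset$ is formally an edge cut but certainly not a guarding set.
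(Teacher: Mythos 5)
Your proof is correct and follows essentially the same route as the paper's: fix a linear extension of the acyclic orientation, take the cut edge minimizing the position difference of its endpoints, and derive a contradiction from the fact that any path witnessing transitivity must cross the cut with a strictly smaller gap. Your restriction to the crossing edges $F$ of a fixed bipartition $(A,B)$ is in fact slightly more careful than the paper's wording, which applies the minimization to all of $C$ and asserts that the chosen edge's endpoints lie in distinct components of $(V,E\setminus C)$ --- a claim that, for the general definition of edge cut used there, holds only for crossing edges.
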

\begin{proof}
  Consider an edge cut $C\subseteq E$ and an acyclic orientation
  $A_G$ of $G$. This acyclic orientation can be used to define a
  partial order on $V$. Let us consider a total ordering of $V$ that
  extends this partial order, and pick an edge $e=uv\in C$ that
  minimizes the rank difference between $u$ and $v$. We claim that $e$
  is flippable. Suppose for the sake of contradiction that $e$ is not
  flippable. Then $e$ must be transitive and there exists a directed
  path $P$ in $A_G$ between $u$ and $v$ that does not use
  $e$. Since $C$ is a cut, $u$ and $v$ belong to distinct connected
  components of $(V,E\setminus C)$, and $P$ must use another edge
  $e'\in C$. By definition, the endpoints of $e'$ have a rank
  difference that is smaller than that of $u$ and $v$, contradicting
  the choice of $e$.\qed
\end{proof}

An even shorter proof of the above can be obtained by reusing the
following observation from Cordovil and Forge~\cite{CF12}: for every
acyclic orientation of $G$, the set of flippable edges is a spanning
set of edges. Therefore, every such set must intersect every edge cut.

While guarding sets are hard to recognize, even for complete graphs,
we show that a minimum-size guarding set can be found in polynomial
time whenever the input graph is chordal. In that case, the upper
bound given by the minimum edge cut is tight, and the result
generalizes Lemma~\ref{lem:complete}.

\begin{theorem}
The minimum size of a guarding set in a chordal graph is the size of a minimum edge cut.
\end{theorem}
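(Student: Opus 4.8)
The plan is to prove the two inequalities separately. By Theorem~\ref{thm:cut}, every edge cut is a guarding set, so the minimum edge cut size is an upper bound on the minimum guarding set size; the content of the statement is the reverse inequality, namely that in a chordal graph no guarding set can be smaller than the minimum edge cut. Equivalently, I want to show: if $S \subseteq E$ has size smaller than the edge connectivity of $G$, then $S$ is not a guarding set, i.e., there is an acyclic orientation of $G$ in which every edge of $S$ is transitive. Since $|S|$ is below the minimum cut size, $(V, E \setminus S)$ is connected, and in fact the non-edges-of-$S$ graph is not just connected but highly connected relative to $S$; the goal is to exploit chordality to turn this connectivity into a good vertex ordering.

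First I would recall the structure of chordal graphs: every chordal graph has a perfect elimination ordering, and more usefully, it has a clique tree, and every connected chordal graph that is not complete has a clique separator. The natural approach is induction on $|V|$ using a minimal clique separator. Suppose $K$ is a minimal separator (a clique) whose removal splits $G$ into components; all of $G[K \cup C_i]$ for the components $C_i$ are again chordal. If $S$ does not cross the separator in the wrong way, I can recurse on the pieces and glue the resulting orderings together, placing one side entirely before $K$ before the other side. The key point is that an edge $uv$ with $u$ in one component and $v$ in another has no directed path avoiding it in the glued orientation unless that path passes through $K$ — and one has to check that edges of $S$ spanning the separator end up transitive, which is where the clique structure of $K$ helps: within $K$ everything is mutually adjacent, so a total order on $K$ makes all edges inside $K$ transitive automatically.

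The base case is the complete graph, handled by Lemma~\ref{lem:complete}: a guarding set in $K_n$ must have size $\ge n-1$, which is exactly the minimum edge cut size of $K_n$, so the statement holds there. For the inductive step, the main obstacle — and the part I'd expect to be genuinely delicate — is ensuring the counting works out: when $|S|$ is below the global edge connectivity $\lambda(G)$, I need that in the induced subproblems the local edge connectivity is still large enough relative to the portion of $S$ living there, so that the induction hypothesis applies and produces orientations in which the local part of $S$ is transitive, while simultaneously the edges of $S$ crossing the separator become transitive "for free" via a directed path through $K$. Concretely, I'd pick the separator so that some component $C_i$ together with $K$ sees few edges of $S$, argue that the cut isolating $C_i$ has size $\ge \lambda(G) > |S|$, and route directed paths for the crossing edges of $S$ through vertices of $K$. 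Handling edges of $S$ that lie entirely inside the separator clique $K$, and the bookkeeping that every crossing edge of $S$ really does get a transitive witness path, is where the proof has to be careful; an alternative, cleaner route would be to build the transitivity-witnessing orientation directly from a perfect elimination ordering, ordering vertices so that whenever $uv \in S$ there is a third vertex adjacent to both that is ordered between them, using that chordality guarantees such a common neighbor inside the relevant clique.
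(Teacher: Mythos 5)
Your proposal correctly isolates the content of the theorem (the reverse inequality, via Theorem~\ref{thm:cut}) and the right base case (Lemma~\ref{lem:complete}), but the inductive step is only sketched, and the route you sketch --- decomposition along a minimal clique separator $K$ and gluing orientations of the pieces $G[K\cup C_i]$ --- has concrete obstacles that you name but do not overcome. First, the induction hypothesis need not apply to the pieces: vertices of $K$ lose all their edges into the other components, so the edge connectivity of $G[K\cup C_i]$ can drop well below $\lambda(G)$, and the counting ``$|S_i| < \lambda(G[K\cup C_i])$'' is exactly what fails. Second, the gluing is inconsistent: the recursive call on $G[K\cup C_1]$ returns an orientation in which $K$ and $C_1$ are interleaved, so you cannot in general place ``$C_1$ before $K$ before $C_2$'' without destroying the transitivity witnesses produced by the recursion. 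Third, the claim that ``a total order on $K$ makes all edges inside $K$ transitive automatically'' is false: an edge joining two consecutive elements of the order (with no common neighbour between them) is flippable, not transitive. Your closing alternative (a perfect elimination ordering placing a common neighbour between the endpoints of each edge of $S$) is closer to what actually works, but as stated it is a wish, not a construction, and a single ordering must satisfy this simultaneously for all edges of $S$.

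The paper's induction peels off one vertex rather than splitting at a separator, which sidesteps all three problems. Since $G$ is chordal and not complete, it has two \emph{nonadjacent} simplicial vertices $u,v$; their incident edge sets are disjoint, so one of them, say $v$, meets at most $\lfloor (k-1)/2\rfloor \le \lfloor (d(v)-1)/2\rfloor$ edges of $S$. One recurses on $G-v$ to get an acyclic orientation making $S\cap E(G-v)$ transitive; the neighbourhood $N(v)$ is a clique, hence linearly ordered as a path $p$ with $d(v)$ vertices, and $v$ is inserted into one of the $d(v)+1$ positions (before $p$, after $p$, or between two consecutive neighbours). An edge $vx\in S$ is transitive for every insertion position except the at most two positions where $x$ is the immediate predecessor or successor of $v$; since $2\lfloor (d(v)-1)/2\rfloor < d(v)+1$, a good position exists. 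This is the same insertion argument as in Lemma~\ref{lem:complete}, which your proposal does not reconstruct. So while your high-level framing is right, the key idea --- choosing which simplicial vertex to remove by a disjointness/pigeonhole argument and re-inserting it into the clique ordering of its neighbourhood --- is missing, and the separator-based plan as written would not go through.
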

\begin{proof}
  We need to show that whenever a set $S$ of edges of a chordal graph
  $G$ has size strictly less than the edge connectivity of $G$, there
  exists an acyclic orientation of $G$ in which all edges of $S$ are
  transitive. Let us denote by $k$ the edge connectivity of $G$, and
  $n$ its number of vertices.

  We proceed by induction. The base case consists of a complete graph
  on $k+1$ vertices. From Lemma~\ref{lem:complete}, the minimum size
  of a guarding set is $k$.  Now suppose the statement holds for every
  chordal graph with $n-1$ vertices, and that there exists a $k$ edge
  connected chordal graph $G$ on $n$ vertices with a guarding set $S$
  of size $k-1$.  Since $G$ is chordal, it has at least two
  nonadjacent simplicial vertices $u$ and $v$. The degree of both $u$
  and $v$ is at least $k$. Hence one of them, say $v$, is incident to
  at most $\lfloor (k-1)/2\rfloor \leq \lfloor (d(v)-1)/2\rfloor$
  edges of $S$. Now remove $v$ and consider, using the induction
  hypothesis, a suitable acyclic orientation of the remaining
  graph. This orientation induces a total order, i.e., a path $p$ with
  $d(v)$ vertices, on the neighbors $N(v)$ of $v$.  Then vertex $v$
  must have one or two incident edges that do not belong to $S$ and
  connect $v$ to the first, or the last, or two consecutive vertices
  of path $p$. Hence we can integrate $v$ in the path such that all
  the edges of $S$ that are incident to $v$ are transitive. This
  yields a suitable acyclic orientation for $G$ and completes the
  induction step.\qed
\end{proof}

When the graph is not chordal, it may happen that the minimum size of
a guarding set is arbitrarily small compared to the edge
connectivity. We can in fact construct a large family of such
examples.

\begin{theorem}
  For every natural number $t\geq 2$ and odd natural number $g$ such
  that $3 \leq g\leq t$, there exists a graph $G$ with edge
  connectivity $t$ and a guarding set of size $g$.
\end{theorem}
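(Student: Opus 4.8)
The plan is to exhibit an explicit construction. We want a graph $G$ whose only small guarding set is some fixed set of $g$ edges, while the edge connectivity is forced up to $t$ by adding many "redundant" edges that can never be flippable. The guiding intuition from Theorem~\ref{thm:cut} and the Cordovil--Forge observation is that flippable edges always span $V$; so if we want a tiny guarding set, every acyclic orientation must have a flippable edge inside a small spanning subgraph. A cycle $C_g$ of odd length $g$ is the natural candidate: in any acyclic orientation of $C_g$ the edges cannot all be transitive (a single chord would be needed to make an edge transitive, and $C_g$ has none), so every acyclic orientation of $C_g$ has a flippable edge. Hence $E(C_g)$ is a guarding set of size $g$ in $C_g$ itself.

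The next step is to \emph{pad} $C_g$ up to edge connectivity $t$ without creating new flippable edges and without destroying the fact that $E(C_g)$ still guards. I would take $G$ to be $C_g$ together with, for each vertex, enough parallel-like structure (or, to stay simple, add a large number of vertices each joined to all of $V(C_g)$, or replace each vertex of $C_g$ by a clique and join consecutive cliques completely) so that the edge connectivity becomes exactly $t$. The key point to verify is that in \emph{every} acyclic orientation of the enlarged graph $G$, at least one of the original $g$ cycle edges is flippable — i.e. nontransitive. For the "blow-up of $C_g$ into cliques of size $s$" construction this should follow from a parity/cyclic argument: restrict any acyclic orientation to the $g$ blown-up super-edges; the induced "macro-orientation" of $C_g$ cannot be acyclic around the cycle in a way that makes all super-edges transitive, because transitivity of a super-edge would require a directed macro-path going the other way around, and summing the cyclic constraints around the odd cycle yields a contradiction. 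Choosing the clique size $s$ as a function of $t$ and $g$ lets us tune the edge connectivity to exactly $t$; the hypothesis $g \le t$ is exactly what guarantees there is room to do this (and $g$ odd, $g\ge 3$ is what makes $C_g$ have no transitive-everywhere orientation).

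Concretely, I would carry out the argument in this order: (i) prove the base fact that $E(C_g)$ is a guarding set of $C_g$ when $g$ is odd, by showing no acyclic orientation of $C_g$ has all edges transitive; (ii) define $G$ as the blow-up $C_g[K_s]$ (replace each vertex by $K_s$, join consecutive cliques completely) and compute its edge connectivity as a function of $s$, $g$, choosing $s$ so the value equals $t$ — here one checks the minimum cut is a "between two consecutive cliques" cut or a "single-vertex" cut and sets the parameters accordingly; (iii) show the set $S$ consisting of one perfect-matching-style set of $g$ edges (one edge per super-edge of the cycle, forming a copy of $C_g$ in $G$) is still a guarding set, by lifting the $C_g$ argument: any acyclic orientation of $G$ induces, via majority/first-in-order, a consistent orientation of the macro-cycle, and a transitive lift of a cycle edge would force a macro directed path around the rest of the odd cycle, contradicting acyclicity; (iv) conclude that $G$ has edge connectivity $t$ and a guarding set of size $g$.

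The main obstacle is step (iii): the blow-up creates many edges, and I must be sure that an adversary choosing the acyclic orientation cannot use the extra intra-clique or cross edges to make \emph{all} $g$ chosen cycle edges transitive simultaneously. The clean way around this is to make the chosen edges $S$ genuinely behave like the edges of $C_g$: pick the $g$ edges so that collapsing each $K_s$ to a point maps $S$ exactly onto $E(C_g)$, and argue that transitivity of a chosen edge $xy$ in $G$ forces, after collapsing, a directed path in $C_g$ from $y$'s clique back to $x$'s clique not using that macro-edge — which is impossible for all $g$ edges at once in an acyclic orientation of an odd cycle by the same summation-of-signs argument as in step (i). If the blow-up turns out to make step (iii) delicate, the fallback is a gentler padding: keep the vertex set $V(C_g)$, add $t - 2$ pairwise "sufficiently generic" extra Hamiltonian-path-like edge sets whose edges are forced transitive in every acyclic orientation that could threaten $S$; but the clique blow-up is the cleaner route and I would pursue it first.
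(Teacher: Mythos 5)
There is a genuine gap, and it is fatal to the construction you propose: in the blow-up $C_g[K_s]$, transitivity of a chosen edge between two consecutive cliques does \emph{not} force a directed ``macro-path'' the long way around the cycle. Because consecutive cliques are completely joined and each clique is itself complete, every chosen edge $a_ib_{i+1}$ (with $a_i\in V_i$, $b_{i+1}\in V_{i+1}$) admits local two-edge detours such as $a_i\to c\to b_{i+1}$ with $c\in V_i\setminus\{a_i\}$. An adversary can exploit these simultaneously for all $g$ edges. Concretely, take the acyclic orientation induced by any linear order that lists the cliques in consecutive blocks $V_1<V_2<\dots<V_g$, with $a_i$ placed first inside $V_i$ for $i<g$, and with some $c_g\in V_g\setminus\{a_g\}$ placed before $a_g$ inside $V_g$. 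Then for $i<g$ the edge $a_ib_{i+1}$ is transitive via $a_i\to c_i\to b_{i+1}$ for any $c_i\in V_i$ after $a_i$, and the wrap-around edge, now oriented $b_1\to a_g$, is transitive via $b_1\to c_g\to a_g$. So for every $s\ge 2$ and every choice of one edge per super-edge, the resulting $S$ is \emph{not} a guarding set; your step (iii) cannot be repaired by a parity argument, since the odd-cycle ``summation of signs'' never comes into play. (A secondary issue: the edge connectivity of $C_g[K_s]$ is $3s-1$, so a uniform blow-up cannot realize every value of $t$ anyway.)

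The paper's construction avoids exactly this pitfall by routing all alternative paths through a single hub. It starts from a wheel with center $c$ and rim $C_g$, and replaces each spoke $cv$ by a copy of $K_{t-1}$ whose vertices are joined to both $c$ and $v$; the guarding set is the $g$ rim edges. The point is that a rim vertex $v$ has only three kinds of neighbours: its two rim neighbours and its private blob, and the blob's only exit besides $v$ is $c$. Hence if two consecutive rim edges $uv$ and $vw$ point the same way (guaranteed by oddness of $g$), making $vw$ transitive forces a directed path $v\leadsto c$, and making $uv$ transitive forces a directed path $c\leadsto v$, yielding a directed cycle --- contradiction. If you want to salvage your plan, you must similarly ensure that the padding used to raise the connectivity to $t$ never creates a short detour around a chosen edge; complete joins between consecutive parts are precisely the wrong kind of padding.
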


\begin{wrapfigure}[12]{r}{0.44\textwidth}%
\vskip-9mm
\quad\includegraphics[width=0.36\textwidth]{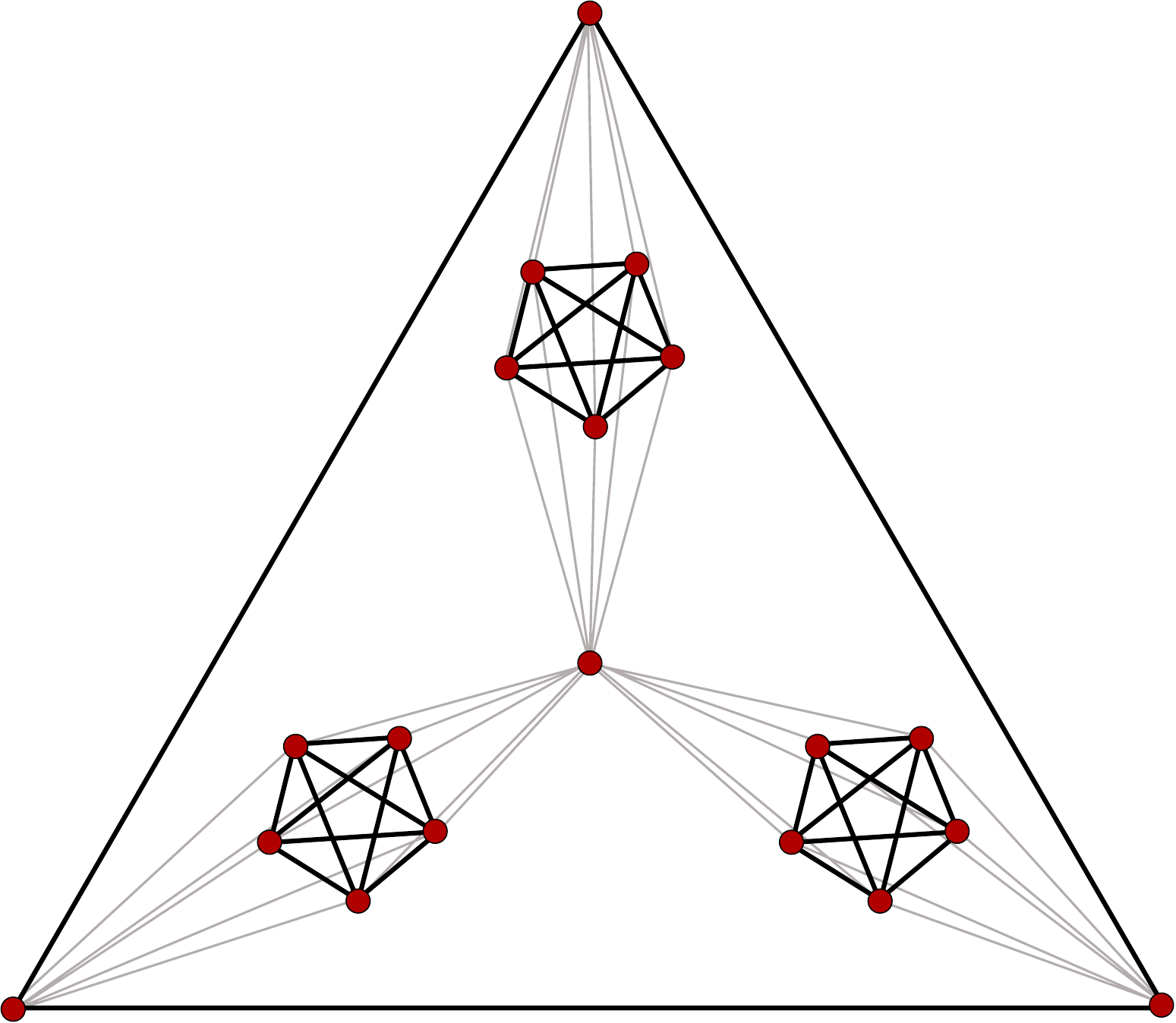}
\caption{\label{fig:exconn}A graph $G$ with edge connectivity 6 and a guarding
  set of size 3.}
\end{wrapfigure}
\noindent{\it Proof.}  
The graph $G$ is constructed by considering a wheel with $g+1$
  vertices and center $c$, and replacing every edge incident to the
  center $c$ by a copy of the complete graph $K_{t-1}$ such that each vertex
  of the complete graph is connected to both endpoints of the original
  wheel edge. Figure~\ref{fig:exconn} shows an example. Let us call $C_g$
  the cycle induced by the non-center vertices.

  Let us first look at the edge connectivity of $G$. Vertices
  belonging to one of the $K_{t-1}$ have $t$ incident edges. On the
  other hand we easily construct $t$ edge-disjoint paths between every
  pair of vertices. Hence, the edge connectivity of $G$ equals $t$.

  We now show that the set of vertices of $C_g$ is a guarding set.

  Suppose the opposite: there exists an acyclic orientation for which
  no edge of the cycle is flippable, hence all of them are transitive.
  Since $C_g$ is odd, there must exist two consecutive edges of $C_g$
  with the same orientation, say $uv$ and $vw$,
 with $uv$ oriented
  from $u$ to $v$, and $vw$ from $v$ to $w$. 

The only way to make $vw$
  transitive is to construct an oriented path from $v$ to $w$ going
  through the center $c$. Hence there must exist an oriented path of
  the form $vPc$, where $P$ is a path in the complete graph $K_{t-1}$
  attached to $v$. To make $uv$ transitive we need a directed path
  $cP'v$. The oriented cycle $vPcP'v$ is in contradicion to the acyclicity.  
  Therefore, some edge of $C_g$ is always flippable.\qed

\section{Median Graphs}
\begin{wrapfigure}[15]{r}{0.44\textwidth}%
\vskip-9mm
{\centering
\includegraphics[width=.44\textwidth]{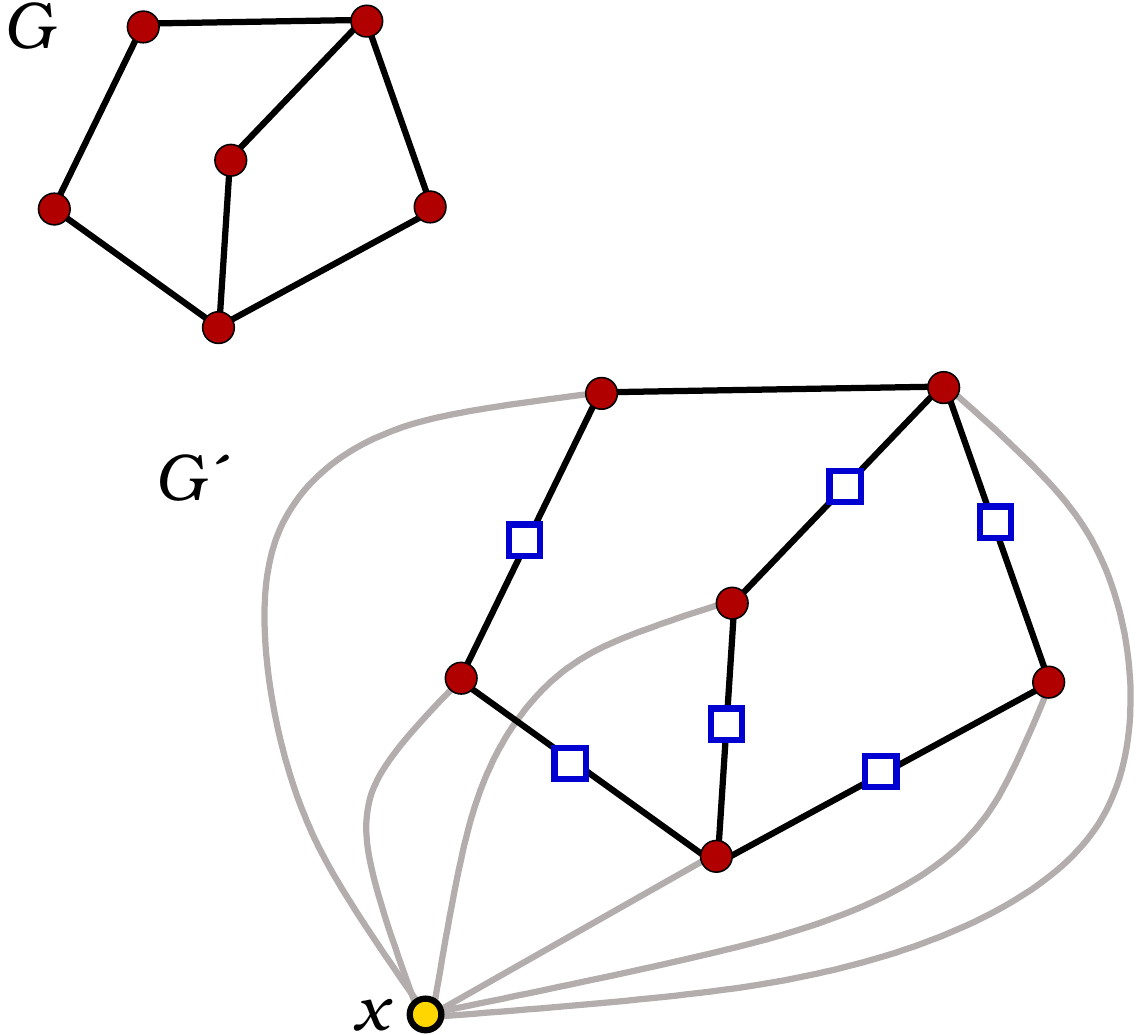}
\caption{A graph $G$ and the $G'$ obtained by the construction.} 
\label{fig:medgraph}}
\end{wrapfigure}

A median graph is an undirected graph in which every three vertices $x$, $y$,
and $z$ have a unique {\it median}, i.e., a vertex $\mu(x,y,z)$ that belongs
to shortest paths between each pair of $x$, $y$, and $z$. Imrich, Klav\v{z}ar,
and Mulder~\cite{IKM99} proposed the following construction of median graphs:
Start with any triangle-free graph $G=(V,E)$. First add an apex vertex $x$
adjacent to all vertices of $V$, then subdivide every edge of $E$ once, and let
$G'$ be the resulting graph. Figure~\ref{fig:medgraph} illustrates the construction.

We only need that $G'$ is a partial cube. This can be shown with the explicit
construction of an isometric embedding into $Q_n$. Let $V=\{v_1,\ldots,v_n\}$
and map these vertices bijectively to the standard basis, i.e., $v_i \to
e_i$. Apex $x$ is mapped to $0$ and the subdivision vertex $w_{ij}$ of an edge
$v_iv_j$ is mapped to $e_i+e_j$. The embedding shows that the zones are in
one-to-one correspondence with the vertices of $G$, where the zone of $v_i$ consists
of the edge $xv_i$ together with all edges $w_{ij}v_{j}$. Hence, a zone cover
of $G'$ corresponds to a subset of $V(G)$.  Let us call $w_e$ the vertex of
$G'$ subdividing the edge $e\in E(G)$. The construction of the embedding into
$Q_n$ shows that the transformation $G \to G'$ yields a partial cube even if
$G$ contains triangles, only membership in the smaller class of median graphs
is lost in this case.

\begin{lemma}
\label{lem:medfromtri}
If $G$ has no isolated vertices, then
the minimum size of a zone cover of $G'$ equals the minimum
size of a minimum vertex cover of $G$.
\end{lemma}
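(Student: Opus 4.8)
The plan is to show the two quantities are equal by proving inequalities in both directions, using the explicit structure of $G'$ described just above the lemma. Recall that a zone cover of $G'$ corresponds to a subset $S \subseteq V(G)$, where choosing $v_i$ into $S$ means the zone of $v_i$ — consisting of the edge $xv_i$ together with all edges $w_{ij}v_j$ — is selected. The vertices of $G'$ come in three types: the apex $x$, the original vertices $v_i$, and the subdivision vertices $w_e$ for $e \in E(G)$. I would first record exactly which vertices each zone covers. The zone of $v_i$ covers $x$ (via $xv_i$), it covers $v_j$ for every edge $v_iv_j \in E(G)$ (via $w_{ij}v_j$), and — crucially — it does \emph{not} cover $v_i$ itself, nor does it cover any subdivision vertex $w_e$. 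So subdivision vertices $w_{ij}$ must be covered by some other mechanism: the only edges incident to $w_{ij}$ are $w_{ij}v_i$ and $w_{ij}v_j$, lying in the zones of $v_i$ and $v_j$ respectively, so $w_{ij}$ is covered by $S$ iff $v_i \in S$ or $v_j \in S$ — that is, iff $S$ covers the edge $ij$ in $G$.

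**Vertex cover $\Rightarrow$ zone cover.** Suppose $S$ is a vertex cover of $G$. I claim $S$ (viewed as a set of zones) covers all of $G'$. Since $G$ has no isolated vertices and $S$ is a vertex cover, $S$ is nonempty, so $x$ is covered by any zone in $S$. Each subdivision vertex $w_{ij}$ is covered because $S$ is a vertex cover, so $v_i \in S$ or $v_j \in S$. It remains to cover the original vertices $v_i$: if $v_i \notin S$, then since $v_i$ is not isolated it has a neighbor $v_j$ in $G$, and because $S$ is a vertex cover of the edge $ij$ we get $v_j \in S$; the zone of $v_j$ contains the edge $w_{ij}v_i$, hence covers $v_i$. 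If $v_i \in S$, its zone covers $x$ but we must still check $v_i$ is covered — and indeed the same argument applies, or more simply $v_i$ has a neighbor $v_j$, and whichever of the two is in $S$ (at least one is) covers $v_i$ through $w_{ij}$. So every vertex of $G'$ is covered, giving $\text{opt-zone}(G') \le \text{vc}(G)$.

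**Zone cover $\Rightarrow$ vertex cover.** Conversely let $S$ be a zone cover of $G'$, viewed as $S \subseteq V(G)$. For every edge $ij \in E(G)$, the subdivision vertex $w_{ij}$ must be covered, and as noted its only incident edges lie in the zones of $v_i$ and $v_j$; hence $v_i \in S$ or $v_j \in S$, so $S$ is a vertex cover of $G$. This gives $\text{vc}(G) \le \text{opt-zone}(G')$, and combining the two inequalities yields equality.

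**Where the care is needed.** There is no deep obstacle here; the proof is a direct structural unpacking. The one place to be careful is the role of the hypothesis that $G$ has no isolated vertices: it is used to guarantee the apex $x$ gets covered (a minimum vertex cover could in principle be empty if $G$ had no edges at all, and even a nonempty one interacts with isolated vertices awkwardly since an isolated vertex of $G$ is never forced into a vertex cover but must still be covered as a vertex of $G'$). If $v_i$ were isolated in $G$, then $v_i \in V(G')$ has its only incident edge $xv_i$ lying in the zone of $v_i$, so covering $v_i$ would force $v_i \in S$ even though $v_i$ need not be in any vertex cover — breaking the equality. I would state this cleanly as the reason for the hypothesis, so the reader sees the correspondence "$w_{ij}$ covered $\iff$ edge $ij$ covered" is the heart of the matter and the isolated-vertex condition is exactly what makes the coverage of $x$ and of the $v_i$'s free.
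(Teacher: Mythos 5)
Your overall strategy is exactly the paper's: translate zones to vertices of $G$, observe that covering the subdivision vertices $w_{ij}$ is equivalent to covering the edges of $G$, and use the no-isolated-vertices hypothesis to get the apex and the original vertices for free. Both inequalities are argued the same way as in the paper, and your discussion of why the hypothesis is needed is correct. However, your structural bookkeeping contains genuine misstatements that you should fix. A vertex is covered by a zone as soon as it is incident to \emph{some} edge of that zone, so the zone of $v_i$, which contains the edge $xv_i$ and all edges $w_{ij}v_j$, covers $x$, covers $v_i$ itself (as the other endpoint of $xv_i$), covers every neighbor $v_j$, \emph{and} covers every subdivision vertex $w_{ij}$ with $j\sim i$ (as an endpoint of $w_{ij}v_j$). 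Your claim that the zone of $v_i$ covers neither $v_i$ nor any $w_e$ is therefore false, and your subsequent sentence assigning $w_{ij}v_i$ to the zone of $v_i$ and $w_{ij}v_j$ to the zone of $v_j$ has the assignment backwards (the edge $w_{ij}v_j$ joins labels $e_i+e_j$ and $e_j$, which differ in coordinate $i$). These slips are self-cancelling here --- the facts you actually use ($w_{ij}$ is covered iff $v_i\in S$ or $v_j\in S$; a vertex cover covers every $v_i$) remain true --- but as written the argument is internally inconsistent: in the case $v_i\in S$ with no neighbor in $S$, you justify coverage of $v_i$ ``through $w_{ij}$,'' which is wrong under the correct zone assignment; the correct reason is simply that $v_i$ is an endpoint of $xv_i$, which lies in its own zone.
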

\begin{proof}
  We first show that a vertex cover $S$ of $G$ corresponds to a zone
  cover of $G'$ of the same size.  For a vertex $w_{ij}$ at least one
  of $v_i$ and $v_j$ is in $S$, hence $w_{ij}$ is covered. If $v_i$
  belongs to $S$ then it is covered by its own zone. Otherwise there
  is an edge $v_iv_j$ and necessarily $v_j\in S$, therefore in this case $v_i$ is
  covered by the zone of $v_j$. The apex $x$ is covered by every zone.

  The other direction is straightforward. A zone cover of $G'$ must in
  particular cover all subdivision vertices $w_e\in V(G')$. Hence the zone
  corresponding to at least one of the endpoint of $e$ must be
  selected, yielding a vertex cover in $G$.\qed
\end{proof}

Given a connected triangle-free graph $G$, one can construct the median graph
$G'$ in polynomial time. Since deciding whether $G$ has a vertex cover of size at 
most $k$ is NP-complete, even on triangle-free graphs, we obtain:

\begin{corollary}
\label{cor:medfromtri}
Given a median graph $G'$ and an integer $k$, deciding whether there
exists a zone cover of size at most $k$ of $G'$ is NP-complete.
\end{corollary}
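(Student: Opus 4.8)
The plan is to read off the corollary from Lemma~\ref{lem:medfromtri} together with the fact, quoted above, that \textsc{Vertex Cover} is NP-complete even when restricted to triangle-free graphs. Membership in NP is immediate: given a median graph $G'$ and an integer $k$, a certificate is a set $S$ of zones with $|S|\le k$; to verify it one first computes the zones of $G'$ (partial cubes are recognizable, and their zone classes computable, in polynomial time) and then checks for every vertex $v$ of $G'$ that $v$ is incident to an edge belonging to some zone of $S$. All of this is polynomial in $|V(G')|$.

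For NP-hardness I would reduce from \textsc{Vertex Cover} on triangle-free graphs. Let $(G,k)$ be such an instance. We may assume that $G$ has at least one edge, since otherwise it is a trivial yes-instance and can be sent to any fixed yes-instance. First delete the isolated vertices of $G$; this produces a triangle-free graph $G_0$ with no isolated vertices and with the same minimum vertex cover as $G$. Then apply the Imrich--Klav\v{z}ar--Mulder construction to $G_0$ to obtain $G_0'$. Since $G_0$ is triangle-free, $G_0'$ is a median graph; it is built in time polynomial in $|V(G_0)|+|E(G_0)|$, and the construction even supplies the isometric embedding into $Q_{|V(G_0)|}$. By Lemma~\ref{lem:medfromtri}, $G_0$ (equivalently, $G$) has a vertex cover of size at most $k$ if and only if $G_0'$ has a zone cover of size at most $k$. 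Thus $(G,k)\mapsto(G_0',k)$ is a correct polynomial-time many-one reduction, so the problem is NP-hard, and combined with membership in NP it is NP-complete.

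The one point requiring care---and the closest thing to an obstacle---is respecting the hypotheses of the two ingredients: the construction yields a \emph{median} graph only for triangle-free inputs, and Lemma~\ref{lem:medfromtri} applies only in the absence of isolated vertices. Skipping the preprocessing step would break the reduction, since each isolated vertex of $G$ forces its own zone into every zone cover of $G_0'$ while contributing nothing to a vertex cover of $G$. After the preprocessing, the equality of the two optima is exactly Lemma~\ref{lem:medfromtri}, and nothing further is needed.
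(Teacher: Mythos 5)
Your proof is correct and follows the same route as the paper: reduce from \textsc{Vertex Cover} on triangle-free graphs via the Imrich--Klav\v{z}ar--Mulder construction and invoke Lemma~\ref{lem:medfromtri}. The paper sidesteps the isolated-vertex issue by assuming the input graph is connected, whereas you handle it by explicit preprocessing; both are fine, and your added remarks on NP membership are a harmless bonus.
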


The minimum vertex cover problem is hard to
approximate, even on triangle-free graphs. This directly yields the
following corollary.

\begin{corollary}
\label{cor:medfromtriapx}
Given a median graph $G'$, finding a minimum size zone cover of $G'$ is
APX-hard.
\end{corollary}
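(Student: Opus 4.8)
The plan is to upgrade the reduction behind Corollary~\ref{cor:medfromtri} into an approximation-preserving reduction (an L-reduction) from minimum vertex cover on triangle-free graphs to minimum zone cover on median graphs. Recall that the transformation $G \mapsto G'$ (add an apex, then subdivide every edge) is computable in polynomial time, and produces a median graph whenever $G$ is triangle-free. If $G$ happens to have isolated vertices we delete them first; this changes neither the minimum vertex cover of $G$ nor the hypotheses of Lemma~\ref{lem:medfromtri}, so we may assume $G$ has no isolated vertices.

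The core fact is already supplied by Lemma~\ref{lem:medfromtri}: the minimum size $\mathrm{opt}_{\mathrm{ZC}}(G')$ of a zone cover of $G'$ equals the minimum size $\mathrm{opt}_{\mathrm{VC}}(G)$ of a vertex cover of $G$. This gives the first L-reduction inequality with constant $1$, namely $\mathrm{opt}_{\mathrm{ZC}}(G') \le \mathrm{opt}_{\mathrm{VC}}(G)$. For the solution-lifting direction I would make explicit the map implicit in the proof of that lemma: the zones of $G'$ are in bijection with $V(G)$, the zone of $v_i$ being $\{xv_i\}\cup\{w_{ij}v_j : v_iv_j\in E\}$, so given any zone cover $Z$ of $G'$ the set $S\subseteq V(G)$ of vertices whose zones lie in $Z$ is a vertex cover of $G$ — since every subdivision vertex $w_e$ with $e=ab$ must be covered, forcing the zone of $a$ or of $b$ into $Z$. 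This map runs in polynomial time and satisfies $|S|\le |Z|$, hence $|S| - \mathrm{opt}_{\mathrm{VC}}(G) \le |Z| - \mathrm{opt}_{\mathrm{ZC}}(G')$, the second L-reduction inequality, again with constant $1$.

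Since minimum vertex cover is APX-hard even restricted to triangle-free graphs, the two L-reduction inequalities transfer this inapproximability to minimum zone cover on median graphs, which is the claim. The only points to verify are that the instances produced are genuinely median (which requires $G$ triangle-free, as the text notes) and have no isolated vertices so that Lemma~\ref{lem:medfromtri} applies verbatim; neither is an obstacle. In fact there is no hard step at all here — the entire content is the observation that Lemma~\ref{lem:medfromtri} is not merely an equality of optima but an approximation-preserving correspondence, so APX-hardness of vertex cover on triangle-free graphs carries over directly.
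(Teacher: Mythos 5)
Your proposal is correct and follows the same route as the paper, which simply invokes the construction of Lemma~\ref{lem:medfromtri} together with the APX-hardness of vertex cover on triangle-free graphs; you merely make explicit the L-reduction inequalities (both with constant $1$) that the paper treats as immediate. No discrepancy.
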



\section{Distributive Lattices}

In this special case of the partial cube covering problem, we
are given a poset $P$, and we wish to find a subset $S$ of its
elements such that the following holds: for every downset $D$ of $P$,
there exists $x\in S$ such that either $D\cup\{x\}$ or $D\setminus
\{x\}$ is a downset, distinct from $D$.  Given a poset $P$, we
refer to a suitable set $S$ as a {\em guarding set} for $P$, and let
$g(P)$ be the size of a smallest guarding set.
Figure~\ref{fig:poset} gives an example of poset and the corresponding partial cube. 

\subsection{Relation to poset fibres}

We first establish a connection between guarding sets and fibres. 
A {\em fibre} of a poset is a subset of its elements that meets every nontrivial maximal antichain. 
Let $f(P)$ be the size of a smallest fibre of $P$.

\begin{lemma}
\label{lem:fibre}
Every fibre is a guarding set. In particular, $g(P)\leq f(P)$.
\end{lemma}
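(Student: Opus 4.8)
The plan is to recast the guarding condition in terms of maximal antichains and then apply the definition of a fibre verbatim. First I record the local structure at a downset $D$: an element $x$ makes $D\cup\{x\}$ a downset different from $D$ exactly when $x$ is a minimal element of $P\setminus D$, and it makes $D\setminus\{x\}$ a downset different from $D$ exactly when $x$ is a maximal element of $D$. So the set of elements that can be flipped at $D$ is $\mathrm{Fl}(D):=\max(D)\cup\min(P\setminus D)$, and $S$ is a guarding set precisely when $S\cap\mathrm{Fl}(D)\neq\emptyset$ for every downset $D$. Hence it suffices to prove the following claim: \emph{for every downset $D$, the set $\mathrm{Fl}(D)$ contains a maximal antichain of $P$.} Granting this, a fibre $F$ meets that maximal antichain, hence meets $\mathrm{Fl}(D)$; since $D$ is arbitrary, $F$ is a guarding set, and taking $F$ of minimum size gives $g(P)\le f(P)$.

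To prove the claim I would construct the antichain explicitly. Set $N:=\min(P\setminus D)$, which is an antichain because $P\setminus D$ is an up-set, and let
\[
  A \;:=\; N \;\cup\; \{\,x\in\max(D)\ :\ x\text{ is incomparable to every element of }N\,\}.
\]
Then $A\subseteq\mathrm{Fl}(D)$, and $A$ is an antichain: $N$ and $\max(D)$ are each antichains, and the two blocks are incomparable to one another by the choice of the second block.

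The core of the argument is that $A$ is a \emph{maximal} antichain. Suppose $z\in P$ is incomparable to every element of $A$. If $z\notin D$, then (the poset being finite) $z$ lies above, or equals, some minimal element of $P\setminus D$, i.e.\ some element of $N\subseteq A$ — contradiction. So $z\in D$. Because $D$ is a downset, $z$ cannot lie above any element of $N$; and since $z$ is incomparable to all of $N\subseteq A$, it cannot lie below any element of $N$ either. If $z$ were not maximal in $D$, pick $z'\in\max(D)$ with $z'\ge z$; then $z'$ is again incomparable to every element of $N$ (it is not above one because $z'\in D$ and $D$ is a downset, and it is not below any $w\in N$ because then $z\le z'<w$, contradicting incomparability of $z$ and $w$), so $z'\in A$, contradicting that $z$ is incomparable to $A$. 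Hence $z\in\max(D)$ and $z$ is incomparable to all of $N$, so $z\in A$ — again a contradiction. This proves the claim and therefore the lemma.

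I expect the maximality step for $A$ to be the main obstacle: one must rule out an element that sits strictly below the top layer of $D$ yet is incomparable to the whole frontier $N$, and the decisive move is the push-up to a maximal element of $D$ that still misses $N$. A related point worth checking is that the antichain $A$ produced is \emph{nontrivial} (has at least two elements); $A$ collapses to a singleton only when $\max(D)$ and $N$ are comparable singletons, a chain-like situation around $D$ that should be inspected on its own before one concludes that the defining property of a fibre applies to $A$.
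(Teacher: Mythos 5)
Your proof is correct and follows essentially the same route as the paper: the paper builds its maximal antichain as $\max(D)\cup B'$, where $B'$ consists of the elements of $\min(P\setminus D)$ incomparable to all of $\max(D)$, while you take the mirror image (all of $\min(P\setminus D)$ plus the incomparable part of $\max(D)$), and your push-up argument for maximality is a more detailed version of what the paper dismisses as ``easily shown.'' The nontriviality caveat you flag at the end is a genuine subtlety, but it is equally present in the paper's proof, which never checks that $\max(D)\cup B'$ has at least two elements --- indeed for a chain the constructed antichain degenerates to a comparable singleton that a fibre need not meet, and $g(P)>f(P)$ there --- so this is a gap in the lemma as stated (for degenerate posets) rather than in your argument specifically.
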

\begin{proof}
  Consider a poset $P=(V,\leq )$ and one of its downset $D\subseteq
  V$. Let $F := V\setminus D$, $A:=\max (D)$, and $B:=\min (F)$.  By
  definition, a guarding set is a hitting set for the collection of
  subsets $A\cup B$ constructed in this way.

  Let us now consider the subset $A\cup B'\subseteq A\cup B$, where
  $B' := \{ b\in B: b \text{\ is\ incomparable\ to\ } a, \forall a\in
  A \}$. This set is easily shown to be a maximal antichain. Hence, it
  must be hit by any fibre.\qed
\end{proof}

Duffus, Kierstead, and Trotter~\cite{DKT91} have shown that
every poset on $n$ elements has a fibre of size at most $2n/3$. This
directly yields the following.
\begin{corollary}
\label{cor:fibre}
For every $n$-element poset $P$, $g(P)\leq 2n/3$.
\end{corollary}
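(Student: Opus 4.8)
The plan is straightforward: Corollary~\ref{cor:fibre} should follow immediately by chaining the two facts we already have in hand. First I would recall Lemma~\ref{lem:fibre}, which states that every fibre of a poset is a guarding set, and hence that $g(P) \leq f(P)$ for every poset $P$. Second, I would invoke the theorem of Duffus, Kierstead, and Trotter~\cite{DKT91}, quoted just above the corollary, which asserts that every poset on $n$ elements has a fibre of size at most $2n/3$, i.e. $f(P) \leq 2n/3$. Combining these two inequalities yields $g(P) \leq f(P) \leq 2n/3$, which is exactly the claim.

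So the proof is a one-line deduction: for an $n$-element poset $P$, apply the Duffus--Kierstead--Trotter bound to get a fibre $F$ with $|F| \leq 2n/3$, then apply Lemma~\ref{lem:fibre} to conclude that $F$ is a guarding set, so $g(P) \leq |F| \leq 2n/3$. There is no real obstacle here, since both ingredients are already established in the excerpt (one as a lemma proved just above, one as a cited external result); the only ``work'' is to make sure the quantifiers line up — the fibre bound is worst-case over all $n$-element posets, and Lemma~\ref{lem:fibre} holds for every poset, so the composition is clean.

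If one wanted to say slightly more, one could note that the inequality $g(P) \leq f(P)$ can be strict, so the $2n/3$ bound is not claimed to be tight for the guarding problem; but for the purposes of stating the corollary, nothing further is needed. The main thing to be careful about when writing this up is simply to cite Lemma~\ref{lem:fibre} and reference~\cite{DKT91} explicitly rather than re-deriving either, keeping the proof to the two or three sentences it deserves.

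\begin{proof}
By Lemma~\ref{lem:fibre}, every fibre of $P$ is a guarding set, so $g(P) \leq f(P)$. By the result of Duffus, Kierstead, and Trotter~\cite{DKT91}, every poset on $n$ elements has a fibre of size at most $2n/3$, that is, $f(P) \leq 2n/3$. Combining the two inequalities gives $g(P) \leq f(P) \leq 2n/3$.\qed
\end{proof}
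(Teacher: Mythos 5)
Your proof is correct and matches the paper's reasoning exactly: the corollary is obtained by combining Lemma~\ref{lem:fibre} ($g(P)\leq f(P)$) with the Duffus--Kierstead--Trotter bound $f(P)\leq 2n/3$. Nothing further is needed.
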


We now consider a special case for which the notions of guarding set and fibre
coincide.  A poset is bipartite if $P=\min(P) \cup \max(P)$, i.e., if the
height of $P$ is at most 2.

\begin{lemma}
\label{lem:bip}
For a bipartite poset $P$, a set $S$ is a guarding set for $P$ if and only if it is a fibre of $P$.
\end{lemma}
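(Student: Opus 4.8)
The plan is to prove both directions of the equivalence. One direction, namely that every fibre is a guarding set, is already established in full generality by Lemma~\ref{lem:fibre} and requires no bipartiteness hypothesis, so I would simply cite it. The work is therefore entirely in the converse: for a bipartite poset $P=\min(P)\cup\max(P)$, every guarding set is a fibre. Unwinding the definitions, I must show that a set $S$ which hits every set of the form $A\cup B$ with $A=\max(D)$ and $B=\min(V\setminus D)$ for some downset $D$ — actually it suffices to hit the canonical maximal antichains $A\cup B'$ produced in the proof of Lemma~\ref{lem:fibre} — in fact hits \emph{every} nontrivial maximal antichain of $P$.

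The key step is to observe that in a bipartite poset every nontrivial maximal antichain arises as one of these $A\cup B'$ sets, for a suitably chosen downset $D$. So let $M$ be a nontrivial maximal antichain of $P$. Since $P$ has height at most $2$, write $M = M_{\min}\cup M_{\max}$ where $M_{\min}=M\cap\min(P)$ and $M_{\max}=M\cap\max(P)$. I would then define $D$ to be the downset generated by $M_{\max}$ together with all the minimal elements that are \emph{not} above any element of $M_{\max}$ and are \emph{not} in $M_{\min}$; more carefully, take $D := M_{\max} \cup \{\, m\in\min(P) : m \notin M\,\}$. One checks this is a downset: the only possible order relations go from a minimal element up to a maximal element, and every maximal element included (those in $M_{\max}$) has all its below-neighbors among $\min(P)$; a below-neighbor $m$ of such an element cannot lie in $M_{\min}$ (else $M$ would not be an antichain), so $m\notin M$ and $m\in D$. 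Then $\max(D) = M_{\max}$ and $\min(V\setminus D) = M_{\min}$, since $V\setminus D$ consists exactly of $M_{\min}$ together with the maximal elements not in $M_{\max}$, and each such maximal element, by maximality of the antichain $M$, must sit above some element of $M_{\min}$ (otherwise $M$ could be enlarged). Hence $B' = M_{\min}$ as well, because elements of $M_{\min}$ are incomparable to all elements of $M_{\max}=A$. Therefore $A\cup B' = M$, and since $S$ is a guarding set it hits $A\cup B'=M$.

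The main obstacle I anticipate is the careful verification that $D$ as defined is genuinely a downset and that $\min(V\setminus D)$ comes out to be exactly $M_{\min}$ rather than something larger — this uses the maximality of the antichain $M$ in an essential way (a maximal element not in $M$ must be comparable to some element of $M$, hence above some element of $M_{\min}$). Once that bookkeeping is done, the argument closes immediately. Degenerate cases (e.g.\ $M_{\max}=\emptyset$, so $D=\emptyset$, or $M_{\min}=\emptyset$, so $D=V$, which would force $M$ to be trivial and is excluded by hypothesis) should be checked but are routine.
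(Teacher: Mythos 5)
Your proposal is correct and follows essentially the same route as the paper: cite Lemma~\ref{lem:fibre} for one direction, and for the converse associate to each nontrivial maximal antichain $M$ a downset whose "boundary" $\max(D)\cup\min(V\setminus D)$ is exactly $M$; indeed, by the maximality of $M$ your explicit set $M_{\max}\cup\{m\in\min(P):m\notin M\}$ coincides with the paper's choice, the downset generated by $M\cap\max(P)$. The only quibbles are cosmetic: the symmetric check that $\max(D)\subseteq M_{\max}$ deserves the same one-line maximality argument you give for $\min(V\setminus D)$, and the case $M_{\min}=\emptyset$ (i.e.\ $D=V$) is not actually excluded by nontriviality but is handled directly since $\max(V)=\max(P)=M$.
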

\begin{proof}
  We know from Lemma~\ref{lem:fibre} that every fibre is a guarding
  set. The other direction is as follows.

  Consider a guarding set $S$ and let $A$ be any maximal antichain of $P$. Let
  $T := A\cap \max (P)$, and let $D$ be the downset generated by $T$. As a
  downset $D$ is guarded, hence either an element of $T$ is hit by $S$, or an
  element of $\min (P\setminus D)$ is hit, but since $A$ is maximal $A=T \cup
  \min (P\setminus D)$, hence $A$ is hit. Therefore, $S$ is a fibre.\qed
\end{proof}

\subsection{Complexity}

Lemma~\ref{lem:bip} yields two interesting corollaries on the
complexity of recognizing and finding guarding sets.

\begin{corollary}
\label{cor:recposet}
Given a poset $P$ and a subset $S$ of its elements, the problem of
deciding whether $S$ is a guarding set is coNP-complete.  This holds
even if $P$ is bipartite.
\end{corollary}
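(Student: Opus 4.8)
The plan is to reduce from a known coNP-hard recognition problem for fibres in bipartite posets. By Lemma~\ref{lem:bip}, for a bipartite poset $P$ a subset $S$ is a guarding set if and only if it is a fibre, so it suffices to show that deciding whether a given subset $S$ of a bipartite poset is a fibre is coNP-complete. Membership in coNP is immediate: a certificate that $S$ is \emph{not} a fibre is a nontrivial maximal antichain $A$ of $P$ disjoint from $S$, and one can verify in polynomial time that $A$ is an antichain, that it is maximal (no element outside $A$ is incomparable to all of $A$), that $|A|\geq 2$, and that $A\cap S=\emptyset$.

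For hardness I would exhibit a reduction from a suitable NP-complete problem to the complement, i.e.\ to the question ``does the bipartite poset $P$ admit a nontrivial maximal antichain avoiding $S$?'' A natural source is independent-set or set-cover style problems: a bipartite poset is essentially a bipartite graph with $\min(P)$ and $\max(P)$ as the two sides and comparabilities as edges (with the convention that elements comparable to everything on the other side are forced), and maximal antichains of a bipartite poset correspond to certain ``maximal cross-independent'' sets. Concretely, I would encode an instance so that the unguarded maximal antichains correspond to the combinatorial objects we want to detect (for instance, the maximal antichain consisting of a set $T\subseteq\max(P)$ together with $\min(P)$-elements incomparable to all of $T$ plays exactly the role that appears in the proof of Lemma~\ref{lem:bip}), and then place $S$ to kill all the ``trivial'' antichains while leaving the nontrivial ones to be found. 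The recognition question for $S$ then becomes equivalent to a genuinely hard search problem over maximal antichains.

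The main obstacle is getting the reduction to land inside the \emph{bipartite} case with the right notion of ``nontrivial'' maximal antichain, since maximal antichains in bipartite posets are quite constrained and one must ensure the hard instances are not accidentally trivialized (e.g.\ by elements that are below or above everything, or by antichains of size one being the only ones avoiding $S$). The cleanest route is probably to observe that this is precisely a complexity statement about fibres already implicit in the literature on the Duffus--Kierstead--Trotter bound (the hardness of fibre-type covering problems on bipartite posets / bipartite graphs is known), cite it, and then invoke Lemma~\ref{lem:bip} to transfer it verbatim to guarding sets. So the proof is: (i) note $S$ is a guarding set iff $S$ is a fibre, by Lemma~\ref{lem:bip}; (ii) verify coNP membership via the maximal-antichain certificate; (iii) reduce from a known NP-complete problem (vertex cover / independent set on an auxiliary bipartite structure) to the existence of an unguarded nontrivial maximal antichain, taking care that the construction stays bipartite. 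I would write out step (iii) in detail and keep (i)--(ii) to a sentence each.
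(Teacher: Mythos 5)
Your ``cleanest route'' is exactly the paper's proof: Lemma~\ref{lem:bip} identifies guarding sets of a bipartite poset with its fibres, and the coNP-completeness of recognizing fibres in bipartite posets is simply cited from Duffus, Kierstead, and Trotter~\cite{DKT91}, so nothing more is needed. Your step~(iii) --- building a fresh reduction to the existence of an unguarded nontrivial maximal antichain --- is therefore unnecessary, and as written it is only a sketch; if you did insist on carrying it out you would essentially have to reproduce the known hardness construction for bipartite fibres rather than the vague vertex-cover encoding you outline.
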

\begin{proof}
  Recognition of fibres in bipartite posets has been proved
  coNP-complete by Duffus et al.~\cite{DKT91}.  From
  Lemma~\ref{lem:bip}, this is the same problem as recognizing
  guarding sets.\qed
\end{proof}

\begin{corollary}
\label{cor:findposet}
Given a poset $P$ and an integer $k$, the problem of deciding whether
there exists a guarding set of size at most $k$ is
$\Sigma_2^P$-complete. This holds even if $P$ is bipartite.
\end{corollary}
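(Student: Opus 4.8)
The plan is to reduce from the problem of finding a minimum fibre in a bipartite poset, which Duffus, Kierstead, and Trotter~\cite{DKT91} showed to be $\Sigma_2^P$-complete. By Lemma~\ref{lem:bip}, for a bipartite poset the notion of guarding set and the notion of fibre coincide, so a set $S$ of size at most $k$ is a guarding set of $P$ if and only if it is a fibre of $P$ of size at most $k$. Hence the decision problem "does $P$ admit a guarding set of size at most $k$?" is, restricted to bipartite inputs, literally the same problem as "does $P$ admit a fibre of size at most $k$?", and the latter is $\Sigma_2^P$-complete. This gives both the $\Sigma_2^P$-hardness and the fact that hardness already holds in the bipartite case.

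For membership in $\Sigma_2^P$ I would argue directly: a guarding set of size at most $k$ can be guessed (the existential quantifier), and then checking that a given candidate $S$ fails to be a guarding set is the complement of the recognition problem, which is in coNP by Corollary~\ref{cor:recposet}; thus verifying that $S$ \emph{is} a guarding set is in coNP, i.e.\ the verification step is a single universal quantifier over a polynomial-size certificate (a downset $D$ together with the element of $S$ that guards it, or rather the failure thereof). So the whole problem has the form $\exists S\, \forall D\,(\dots)$ with a polynomial-time-checkable matrix, placing it in $\Sigma_2^P$.

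The only real subtlety is that the cited $\Sigma_2^P$-completeness of minimum fibre from~\cite{DKT91} must indeed be stated (or be extractable) for the bipartite case; if the reference only gives it for general posets, one would instead invoke their coNP-completeness of \emph{recognizing} fibres in bipartite posets (used already in Corollary~\ref{cor:recposet}) as the coNP-complete oracle sitting below the existential quantifier, and then a standard argument shows that optimizing over a problem whose feasibility check is coNP-complete yields a $\Sigma_2^P$-complete minimization problem — but one still needs the matching hardness direction, which is exactly where appealing to the bipartite minimum fibre result of~\cite{DKT91} is cleanest. I expect this bookkeeping about which version of the source result to cite to be the main obstacle; everything else is routine once Lemma~\ref{lem:bip} and Corollary~\ref{cor:recposet} are in hand.
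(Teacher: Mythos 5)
Your proposal follows essentially the same route as the paper: identify guarding sets with fibres via Lemma~\ref{lem:bip} and then import the $\Sigma_2^P$-completeness of the minimum-fibre problem for bipartite posets. The one issue is the attribution, which you yourself flagged as the main worry: the source for that hardness result is not Duffus, Kierstead, and Trotter~\cite{DKT91} (who proved the coNP-completeness of \emph{recognizing} fibres in bipartite posets, used in Corollary~\ref{cor:recposet}), but the more recent paper of Cardinal and Joret~\cite{CJ12}, which establishes $\Sigma_2^P$-completeness of deciding whether a bipartite poset has a fibre of size at most $k$. With that citation corrected, your argument (including the explicit $\exists S\,\forall D$ membership check, which the paper leaves implicit) is exactly the paper's proof; your fallback sketch in the last paragraph is unnecessary, and as you note it would not by itself supply the hardness direction.
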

\begin{proof}
  Again, this is a consequence of Lemma~\ref{lem:bip} and a recent
  result of Cardinal and Joret~\cite{CJ12} showing that the
  corresponding problem for fibres in bipartite posets is
  $\Sigma_2^P$-complete.\qed
\end{proof}

Duffus et al.~\cite{DSSW91} mention that it is possible to construct
posets on $15n+2$ elements, every fibre of which must contain at least
$8n+1$ elements, which gives a lower bound with a factor $8/15$. The
guarding sets for these examples do not seem to require as many
elements.

\section*{Open Problems}
We left a number of problems open. For instance, we do not know the complexity
status of the problem of deciding whether there exists a guarding set of edges
of size at most $k$ in a given graph. A natural candidate class would be
$\Sigma_2^P$. For the same problem, we do not have any nontrivial lower bound
on the minimum size of a guarding set. It would also be interesting to give
tighter lower and upper bounds on the minimum size of a guarding set in a
poset and in a line arrangement. Finally, questions involving
partial cubes derived from antimatroids, such as elimination orderings in chordal
graphs, are currently under investigation.

\paragraph{Acknowledgments.}
This work was initiated at the 2nd ComPoSe Workshop held at the TU Graz,
Austria, in April 2012. We thank the organizers and the participants for the
great working atmosphere. We also acknowledge insightful discussions on
related problems with several colleagues, in particular Michael Hoffmann (ETH
Z\"urich) and Ferran Hurtado (UPC Barcelona).

\bibliographystyle{siam}
\global\advance\itemsep12pt
\bibliography{cubecover}

\end{document}